\documentclass[12pt]{article}
\usepackage{amsfonts}
\usepackage{amsmath}

\usepackage[english]{babel}
\usepackage[T1]{fontenc}

\usepackage[top=30truemm,bottom=30truemm,left=25truemm,right=25truemm]{geometry}

\newtheorem{theorem}{Theorem}
\newtheorem{definition}{Definition}

\newtheorem{proposition}{Proposition}

\newtheorem{proof}{Proof}

\newtheorem{note}{Note}

\def\<{\langle}
\def\>{\rangle}

\begin{document}

\centerline{\bf On Transmission Efficiency of  Quantum Modulations }

\bigskip\bigskip

\centerline{\sc Kyouhei Ohmura}
\vspace{+1mm}
\centerline{\it Department of Information Sciences,}
\centerline{\it Tokyo University of Science,}
\centerline{\it Noda City, Chiba 278-8510, Japan}
\centerline{E-mail: {\tt 6317701@ed.tus.ac.jp, \quad \tt ohmura.kyouhei@gmail.com}}
\bigskip
\centerline{\sc Noboru Watanabe}
\vspace{+1mm}
\centerline{\it Department of Information Sciences,}
\centerline{\it Tokyo University of Science,}
\centerline{\it Noda City, Chiba 278-8510, Japan}
\centerline{E-mail: {\tt watanabe@is.noda.tus.ac.jp}}

\bigskip\bigskip

\centerline{\bf Abstract }

\bigskip

\noindent In quantum information theory, it is important to find modulations with low information loss for noisy channels. In this paper, using the quantum dynamical entropy and the quantum dynamical mutual entropy, we investigate the transmission efficiency of two quantum modulators through attenuation channels.

\bigskip
\noindent 
{\it Key words:} Quantum Information Theory; Quantum Dynamical Entropy; Quantum Dynamical Mutual Entropy; Quantum Modulated State.

\tableofcontents

\section{Introduction}
The dynamical entropy, formulated by Kolmogorov to solve mathematical isomorphic problems, is also important in information theory since it gives the average of information amount of information sources \cite{bili}, \cite{km1}, \cite{km2}, \cite{km3}. This entropy has several noncommutative generalizations, and they are used for classifications of operator algebras \cite{af}, \cite{cnt}, \cite{cs}, \cite{emch}. On the other hand, another important measure in information theory is dynamical mutual entropy. The dynamical mutual entropy gives the average of the amount of information transimitted correctly from the input system to the output system.

In \cite{muraki}, Ohya and Muraki formulated new noncommutative dynamical entropy and dynamical mutual entropy on $C^*$-algebras using the concept of compound states introduced by Ohya \cite{compmutual}. The dynamical entropies are used not only for mathematical classification of operator algebras but also for model calculations of quantum information theory \cite{noteon}.

Incidentally, when considering the transmission of quantum states, it is important to study modulating the initial states into suitable states for the channel. That is, it is important to investigate modulation schemes that efficiently transmit quantum information \cite{itsuse}, \cite{modu}, \cite{makino}.

In this paper, using the quantum dynamical entropy and dynamical mutual entropy given by Ohya and Muraki, when the attenuation channel is used as the channel between input system and output system, we investigate which modulator, PPM (Pulse Position Modulator) or PWM (Pulse Width Modulator), transmits quantum information more efficiency.

The paper is organized as follows: In Sec. 2 we recall the definitions of some quantum channels and the attenuation channel. In Sec. 3 we mention the definitions of basic quantum entropies in order to define the quantum dynamical mutual entropy, and recall some known facts about the entropies. Section 4 is devoted to describe the definitions of quantum dynamical entropy and quantum dynamical mutual entropy given by Ohya and to state their properties. The main result in this paper is Sec. 5, where we discuss the efficiency of the optical modulations (PPM and PWM) with the quantum states by using the entropy ratio given by the quantum dynamical entropies.


\section{Quantum Channels}
In this section, we briefly recall the notions of several quantum channels.\\

\noindent Let $(\mathcal{A}, \mathfrak{S}(\mathcal{A}), \alpha (G))$ be an input quantum dynamical system and $(\mathcal{B}, \mathfrak{S} (\mathcal{B}), \beta (G' ))$, be an that of output. Namely,  $\mathcal{A}$ (resp. $\mathcal{B}$) is a $C^{*}$-algebra, $\mathfrak{S}(\mathcal{A})$ (resp. $\mathfrak{S} (\mathcal{B})$) is the set of all states on $\mathcal{A}$ (resp. $\mathcal{B}$) and $\alpha(G)$ (resp. $\beta (G')$) is the set of all *-automorphisms on $\mathcal{A}$ (resp. $\mathcal{B}$) associate with the group $G$ (resp. $G'$). The above triplet represents the dynamics of the quantum system .  \\

\begin{definition}
A map $\Lambda^{*}$ from $\mathfrak{S}(\mathcal{A})$ to $\mathfrak{S} (\mathcal{B})$ is called a {\it channel}.
\end{definition}

\begin{definition}
If
\begin{equation}
\Lambda^{*}(\sum_n \lambda_n \varphi_n) = \sum_n \lambda_n \Lambda^{*} (\varphi_n)
\end{equation}
holds for any $\varphi_n \in \mathfrak{S}(\mathcal{A})$, and $\sum_n \lambda_n = 1, \lambda_n \ge 0 \ (\forall n \in \mathbb{N})$, $\Lambda^{*}$ is called a {\it linear channel}.
\end{definition}

\begin{definition}
$\Lambda^{*}$ denotes the dual map of $\Lambda : \mathcal{B} \to \mathcal{A}$, i.e.
$$
\Lambda^{*}(\varphi)(B) = \varphi(\Lambda (B))
$$
for any $\varphi \in \mathfrak{S}(\mathcal{A})$ and any $B \in \mathcal{B}$. If $\Lambda$ satisfies
\begin{equation}
\sum_{i,j=1}^{n} A_{i}^{*}\Lambda (B_{i}^{*}B_j )A_j \ge 0
\end{equation}
for any $n \in \mathbb{N}$, any $B_j \in \mathcal{B}$, and any $A_j \in \mathcal{A}$, 
$\Lambda^{*}$ is called a {\it completely positive channel} ({\it c.p. channel} for short).
\end{definition}
It is known that the c.p. channels $\Lambda^* : \mathfrak{S}(\mathcal{A}) \to \mathfrak{S}(\mathcal{B})$ can describe the physical transformations of several quantum systems \cite{itsuse}, \cite{qent}, \cite{makino}.


\subsection{Attenuation Channel}
In communication process, we have to consider the loss of information in the course of information transmission. The attenuation channel given by Ohya and Watanabe \cite{modu} is a mathematical representation of a quantum channel whose noise is given by vacuum state.\\

\noindent Let $\mathcal{A} = \mathbf{B}(\mathcal{H}_1)$ (resp. $\bar{\mathcal{A}} = \mathbf{B}(\mathcal{H}_2))$. Therefore $ \mathfrak{S}(\mathcal{A}) = \mathfrak{S}(\mathcal{H}_1)$ (resp. $\mathfrak{S}(\bar{\mathcal{A}}) = \mathfrak{S}(\mathcal{H}_2))$ is the set of all density operators on a Hilbert space $\mathcal{H}_1 $(resp. $\mathcal{H}_2)$. Furthermore, let $\mathcal{B}$ (resp. $\bar{\mathcal{B}}) $ be a $C^{*}$-algebra on another Hilbert space $\mathcal{K}_1$(resp. $ \mathcal{K}_2) $. Then each state spaces correspond to each physical systems as follows:
\begin{enumerate}
\item $\mathfrak{S}(\mathcal{H}_1)$ : Input system.
\item $\mathfrak{S}(\mathcal{K}_1)$ : Noisy system.
\item $\mathfrak{S}(\mathcal{K}_2)$ : Loss system.
\item $\mathfrak{S}(\mathcal{H}_2)$ : Output system.
\end{enumerate}

\noindent Now $| n\>$ ($n \in \mathbb{Z}_+$) denotes a $n$-th number photon vector state in $\mathcal{H}_i$ or $\mathcal{K}_i$ ($i = 1, 2$) and $V_0$ denotes a mapping from $\mathcal{H}_1 \otimes \mathcal{K}_1$ to $ \mathcal{H}_2 \otimes \mathcal{K}_2$:
\begin{eqnarray}
V_0 (|n \> \otimes |0\rangle ) := \sum_{j}^{n}C_{j}^{n}|j\rangle  \otimes |n - j\rangle ,
\end{eqnarray}
where
\begin{eqnarray}
C_{j}^{n} := \sqrt{ \frac{n !}{j! (n -j)!}} \alpha^j (-\beta)^{n -j}\quad , \quad \alpha^2 + \beta^2 = 1.
\end{eqnarray}
Using $V_0$, one obtain the CP channel $\pi_0^* : \mathfrak{S}(\mathcal{H}_1 \otimes \mathcal{K}_1) \to \mathfrak{S}(\mathcal{H}_2 \otimes \mathcal{K}_2 )$ given by
$$
\pi_0^* (\cdot) := V_0 (\cdot) V_0^* .
$$

\begin{definition}
Under the above settings, {\it attenuation channel} $\Lambda_{0}^{*} : \mathfrak{S}(\mathcal{H}_1) \to \mathfrak{S}(\mathcal{H}_2)$ {\it with a vacuum state} $\zeta_0 := |0\>\< 0|$ was defined by
\begin{equation}\label{atte}
\Lambda_0^{*}(\rho) := \mathrm{Tr}_{\mathcal{K}_{2}}\pi_0^{*}(\rho \otimes \zeta_0) = \mathrm{Tr}_{\mathcal{K}_2} V_0 (\rho \otimes |0 \rangle \langle 0|) V_0^{*} .
\end{equation}
\end{definition}

\noindent Then 
\begin{equation}\label{eta}
\eta = |\alpha |^{2}  \qquad (0 \le \eta \le 1)
\end{equation}
is called a {\it transition ratio} of the attenuation channel $\Lambda_{0}^{*}.$


\section{Quantum Entropies}
We introduce some definitions and related theorems of entropies needed for formulations of the quantum dynamical entropies of next section.

\subsection{$\mathcal{S}$-Mixing Entropy}
\noindent In \cite{smix}, Ohya generalized von Neumann entropy to $C^*$-algebras.\\
\noindent Let $(\mathcal{A}, \mathfrak{S}(\mathcal{A}), \alpha(G))$ be a $C^{*}$-dynamical system and $\mathcal{S}$ be a weak* compact and convex subset of $\mathfrak{S}(\mathcal{A})$.
\begin{note}
$\mathfrak{S}(\mathcal{A})$, $I(\alpha)$ (the set of all invariant states for $\alpha$) and $K(\alpha)$ (the set of all KMS states) are  weak* compact and convex subset of $\mathfrak{S}(\mathcal{A})$.
\end{note}

\noindent Let ${\rm ex} \mathcal{S}$ be the set of all extreme points of $\mathcal{S}$. From the Krein-Mil'man theorem \cite{krein}, there holds ${\rm ex}\mathcal{S} \neq \phi$. Every state $\varphi \in \mathcal{S}$ has a maximal measure $\mu$ pseudosupported on $\mathrm{ex}\mathcal{S}$ such that
\begin{equation}\label{bc}
\varphi = \int_{\mathrm{ex} \mathcal{S}} \omega d\mu .
\end{equation}
The measure $\mu$ giving the above decomposition is not unique unless $\mathcal{S}$ is a Choquet simplex. Then $M_{\varphi}(\mathcal{S})$ denotes the set of all such measures. Moreover, if  $\mu \in M_{\varphi} (\mathcal{S})$ has countable supports, that is, there holds
$$
\mu = \sum \lambda_k \delta_{\varphi_k},
$$
where $\lambda_k > 0$, $\sum \lambda_k = 1$, and $\{ \varphi_k \} \subset {\rm ex} \mathcal{S}$, we put the set of  all such measures by $D_{\varphi} (\mathcal{S})$.

\begin{definition}\label{Smix}
Under the above settings, the entropy of $\varphi \in \mathcal{S}$ is given by
\begin{equation}
S^{\mathcal{S}} (\varphi ) := 
\begin{cases}
\inf \{ - \sum \lambda_k \log \lambda_k ;\ \mu = \{ \lambda_k \} \in D_{\varphi} (\mathcal{S}) \} \\
+ \infty \qquad (\mu \notin D_{\varphi} (\mathcal{S}))
\end{cases}
\end{equation}
\end{definition}

\noindent This entropy is called {\it $\mathcal{S}$-mixing entropy} and describes the amount of information of the state $\varphi$ measured from the subsystem $\mathcal{S}$. Then the following theorem holds.

\begin{theorem}
If $\mathcal{A} = \mathbf{B} (\mathcal{H})$ and $\mathcal{S} = \mathfrak{S}(\mathcal{H})$, $\mathcal{S}$-mixing entropy corresponds to von Neumann entropy \cite{vn}, i.e.
\begin{equation}
S^{\mathcal{S}}(\varphi) = - {\rm Tr}\rho \log \rho \quad ,\quad  \rho \in \mathfrak{S}(\mathcal{H}).
\end{equation}
\end{theorem}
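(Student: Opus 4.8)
The plan is to show that when $\mathcal{A} = \mathbf{B}(\mathcal{H})$ and $\mathcal{S} = \mathfrak{S}(\mathcal{H})$, the infimum defining $S^{\mathcal{S}}(\varphi)$ is attained precisely by the spectral decomposition of the density operator $\rho$ corresponding to $\varphi$, and that this value equals the von Neumann entropy $-\mathrm{Tr}\,\rho\log\rho$.

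First I would identify the extreme points of $\mathcal{S} = \mathfrak{S}(\mathcal{H})$: these are exactly the pure states, i.e.\ the one-dimensional projections $|\psi\rangle\langle\psi|$. Any countable extremal decomposition $\mu = \{\lambda_k\} \in D_{\varphi}(\mathcal{S})$ thus corresponds to writing $\rho = \sum_k \lambda_k |\psi_k\rangle\langle\psi_k|$ with unit vectors $\psi_k$, and the quantity being minimized is the Shannon entropy $-\sum_k \lambda_k \log \lambda_k$ of the weights. So the claim reduces to the statement that, over all such convex decompositions of $\rho$ into pure states, the minimum of the classical Shannon entropy of the coefficients equals $-\mathrm{Tr}\,\rho\log\rho$.

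Next I would establish the two inequalities. For the upper bound $S^{\mathcal{S}}(\varphi) \le -\mathrm{Tr}\,\rho\log\rho$, I would exhibit a single admissible decomposition achieving the von Neumann value: diagonalize $\rho = \sum_k \mu_k |e_k\rangle\langle e_k|$ by the spectral theorem, where $\{e_k\}$ is an orthonormal eigenbasis and $\mu_k \ge 0$, $\sum_k \mu_k = 1$. This is an element of $D_{\varphi}(\mathcal{S})$ whose coefficient entropy is exactly $-\sum_k \mu_k \log \mu_k = -\mathrm{Tr}\,\rho\log\rho$, so the infimum is no larger. The reverse inequality $S^{\mathcal{S}}(\varphi) \ge -\mathrm{Tr}\,\rho\log\rho$ is the substantive direction: I would take an arbitrary decomposition $\rho = \sum_k \lambda_k |\psi_k\rangle\langle\psi_k|$ and show $-\sum_k \lambda_k \log \lambda_k \ge -\mathrm{Tr}\,\rho\log\rho$. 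The natural tool here is the concavity of the von Neumann entropy together with the fact that each pure state has zero entropy: indeed, by concavity, $-\mathrm{Tr}\,\rho\log\rho = S\!\left(\sum_k \lambda_k |\psi_k\rangle\langle\psi_k|\right) \ge \sum_k \lambda_k\, S(|\psi_k\rangle\langle\psi_k|) = 0$ gives the wrong direction, so instead I would invoke the stronger mixing bound $S\!\left(\sum_k \lambda_k \rho_k\right) \le \sum_k \lambda_k S(\rho_k) - \sum_k \lambda_k \log \lambda_k$; applied with $\rho_k = |\psi_k\rangle\langle\psi_k|$ (each $S(\rho_k)=0$) this yields exactly $-\mathrm{Tr}\,\rho\log\rho \le -\sum_k \lambda_k \log \lambda_k$, which is the desired bound valid for every admissible decomposition.

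The main obstacle is the reverse inequality, and specifically having the right entropy bound available. The inequality $S(\sum_k \lambda_k \rho_k) \le H(\{\lambda_k\}) + \sum_k \lambda_k S(\rho_k)$ is a standard property of von Neumann entropy (it follows from subadditivity applied to a classical–quantum state, or equivalently from the concavity/Klein-inequality circle of results), so I would cite it as a known fact rather than reprove it. The only remaining care is bookkeeping at the boundary: handling terms with $\lambda_k = 0$ or $\mu_k = 0$ via the convention $0\log 0 = 0$, and confirming that the spectral decomposition genuinely lies in $D_{\varphi}(\mathcal{S})$ (countable support, strictly positive retained weights after discarding zero eigenvalues). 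With these two matching inequalities the infimum is pinned down exactly at $-\mathrm{Tr}\,\rho\log\rho$, completing the identification with von Neumann entropy.
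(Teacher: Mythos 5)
The paper states this theorem without proof, citing it as a known result from Ohya's work on $\mathcal{S}$-mixing entropy, so there is no in-paper argument to compare against; judged on its own, your proof is correct and is essentially the standard one. You correctly identify $\mathrm{ex}\,\mathfrak{S}(\mathcal{H})$ with the pure states, get the upper bound from the Schatten (spectral) decomposition, and get the lower bound from the mixing inequality $S\bigl(\sum_k \lambda_k \rho_k\bigr) \le -\sum_k \lambda_k\log\lambda_k + \sum_k \lambda_k S(\rho_k)$ applied to pure $\rho_k$; your remark that plain concavity points the wrong way and that the stronger bound is needed is exactly the right observation. The only points worth adding for completeness are the degenerate/infinite-dimensional edge cases: when an eigenvalue is degenerate the Schatten decomposition is non-unique but all choices give the same weight entropy, and when $-\mathrm{Tr}\,\rho\log\rho = +\infty$ the lower bound forces every admissible decomposition to have infinite weight entropy, so the claimed equality still holds.
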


\noindent By taking the set of all quantum channels as $\mathcal{S}$, Mukhamedov and Watanabe defined a general extension of the $\mathcal{S}$-mixing entropy and obtained important results for entangled states \cite{mw}.

\subsection{Relative Entropy of States}

In information theory, the relative entropy is an information which represents the complexity of a state with respect to another state. In \cite{condexp}, Umegaki introduced the relative entropy (which is so-called {\it quantum relative entropy}) for $\sigma$-infinite and semifinite von Neumann algebras.

\begin{definition}
For two density operators $\rho$ and $\sigma$, {\it Umegaki relative entropy} is defined as
\begin{equation}\label{umerela}
S(\rho , \sigma) = 
\begin{cases}
\mathrm{Tr} \rho (\log \rho - \log \sigma) & {\rm if } \ \mathrm{supp}\sigma \geq \mathrm{supp}\rho \\
+\infty & {\rm otherwise}.
\end{cases}
\end{equation}
\end{definition}

\noindent Araki generalized the relative entropy (\ref{umerela}) to the general von Neumann algebras using the relative modular operator \cite{araki}.  Moreover, the relative entropy on *-algebras was formulated by Uhlmann \cite{uhl}.

\subsection{Mutual Entropy of States}
The notion of mutual entropy is the amount of information correctly transmitted from the input system to the output . In \cite{compmutual}, the quantum analogue of the mutual entropy was defined by Ohya with respect to density operators. Furthermore, he generalized the notion of quantum mutual entropy for $C^*$-dynamical systems.\\

\noindent Let $(\mathcal{A}, \mathfrak{S}(\mathcal{A}), \alpha (G))$ and $(\mathcal{B}, \mathfrak{S}(\mathcal{B}), \beta (G'))$ be unital $C^{*}$-dynamical systems (i.e. with the identity), and $\mathcal{S}$ be a weak* compact convex subset of $\mathfrak{S}(\mathcal{A}).$
\begin{definition}
For an initial state $\varphi \in \mathcal{S}$ and a channel $\Lambda^{*} : \mathfrak{S}(\mathcal{A}) \to \mathfrak{S}(\mathcal{B})$, two {\it compound states} are given by
\begin{equation}\label{true}
\Phi_{\mu}^{\mathcal{S}} := \int_{\mathcal{S}} \omega \otimes \Lambda^{*}\omega d\mu ,
\end{equation}
\begin{equation}
\Phi_0 := \varphi \otimes \Lambda^{*} \varphi.
\end{equation}
\end{definition}
The compound state $\Phi_{\mu}^{\mathcal{S}}$ represents the correlation between the input state $\varphi$ and the output state $\Lambda^{*}\varphi$. On the other hand, one can see that $\Phi_0$ doesn't express the correlation. \\
\noindent Then the mutual entropy with respect to $\mathcal{S}$ and $\mu$ is given by
$$
I_{\mu}^{\mathcal{S}}(\varphi , \Lambda^{*}) = S(\Phi_{\mu}^{\mathcal{S}}, \Phi_0),
$$
where $S(\cdot , \cdot)$ is the Araki's relative entropy.

\begin{definition}
Under the above notations, the mutual entropy with respect to $\mathcal{S}$ is given by
\begin{equation}
I^{\mathcal{S}}(\varphi \ ; \Lambda^{*}) = \sup \{ I_{\mu}^{\mathcal{S}}(\varphi , \Lambda^{*}) \ ; \ \mu \in M_{\varphi}(\mathcal{S}) \}.
\end{equation}
\end{definition}
\noindent When $\mathcal{S}$ is the total space $\mathfrak{S}(\mathcal{A})$, we simpley denote $I (\varphi \ ; \Lambda^*)$ and $S(\varphi)$.\\

\noindent Now we show the definition of mutual entropy if the state defined by a density operator.\\
Let $\mathcal{A} = \mathbf{B} (\mathcal{H})$. Then any normal state $\varphi$ can be written as  $\varphi (A) = \mathrm{Tr} \rho A$ ($\forall A \in \mathcal{A}$) using the corresponding the density operator $\rho$.
 Every Schatten decomposition \cite{sch} $\rho = \sum_n \lambda_n E_n  ,\ E_n = |x_n \rangle \langle x_n |$ provides every orthogonal measures in $D_{\varphi}(\mathfrak{S}(\mathcal{A}))$. Since the Schatten decomposition of $\rho$ is not unique unless every eigenvalue $\lambda_n$ is nondegenerate, the compound state $\Phi_{\mu}^{\mathcal{S}}$ (\ref{true}) is expressed as 
$$
\Phi_E (\mathcal{Q}) := \Phi_{\mu}^{\mathcal{S}}(\mathcal{Q}) = \mathrm{Tr} \sigma_E \mathcal{Q} \quad ,\quad \mathcal{Q} \in \mathcal{A} \otimes \mathcal{B}
$$
with
$$
\sigma_E := \sum_n \lambda_n E_n \otimes \Lambda^{*}E_n,
$$
where $E$ represents the Schatten decomposition $\{ E_n \}$. 

\begin{definition}
Then the mutual entropy for $\rho$ and the channel $\Lambda^{*}$ is given by
\begin{equation}
I (\rho \ ; \Lambda^{*}) = \sup \{ S(\sigma_E , \sigma_0) \ ;\ E = \{ E_n \} \  {\rm of} \ \rho \},
\end{equation}
where $S(\cdot , \cdot)$ is the Umegaki relative entropy and $\sigma_0 := \rho \otimes \Lambda^{*}\rho$.
\end{definition}

\noindent For $I (\varphi \ ; \Lambda^{*})$, Ohya proved the following theorem called the {\it fundamental inequalities} \cite{compmutual}. 

\begin{theorem}
\begin{equation}\label{fund}
0 \le I(\varphi \ ; \Lambda^{*}) \le \min \{ S(\varphi),\ S(\Lambda^{*}\varphi) \}.
\end{equation}
\end{theorem}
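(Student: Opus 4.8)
The plan is to work throughout with the density-operator formulation $I(\rho\,;\Lambda^{*}) = \sup_E S(\sigma_E,\sigma_0)$ given just above, where $\sigma_E = \sum_n \lambda_n E_n\otimes\Lambda^*E_n$, $\sigma_0 = \rho\otimes\Lambda^*\rho$, and $E=\{E_n\}$ runs over the Schatten decompositions $\rho=\sum_n\lambda_n E_n$; the general $C^{*}$ statement then follows from the same computation applied to the representing density operators. The first and central step is to obtain a closed form for a single term $S(\sigma_E,\sigma_0)$. Here I would use two structural facts: that $\sigma_0$ is a product, so $\log\sigma_0 = \log\rho\otimes I + I\otimes\log\Lambda^*\rho$, and that the $E_n$ are mutually orthogonal one-dimensional projections, so the summands of $\sigma_E$ have orthogonal supports in the first factor and $\sigma_E$ is block diagonal there. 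Evaluating $\mathrm{Tr}\,\sigma_E\log\sigma_E$ and $\mathrm{Tr}\,\sigma_E\log\sigma_0$ and using the marginal identities $\mathrm{Tr}_{\mathcal B}\sigma_E=\rho$ and $\mathrm{Tr}_{\mathcal A}\sigma_E=\Lambda^*\rho$ (the latter by linearity of $\Lambda^*$), the contributions involving $\log\rho$ cancel and I expect to reach the key identity
\begin{equation}
S(\sigma_E,\sigma_0) = \sum_n\lambda_n\,S(\Lambda^*E_n,\Lambda^*\rho) = S(\Lambda^*\varphi)-\sum_n\lambda_n S(\Lambda^*E_n) = S(\varphi)+S(\Lambda^*\varphi)-S(\sigma_E),
\end{equation}
where the last two equalities are algebraic rewritings using $\sum_n\lambda_n=1$, $S(\varphi)=-\sum_n\lambda_n\log\lambda_n$, and $S(\sigma_E)=S(\varphi)+\sum_n\lambda_n S(\Lambda^*E_n)$.

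All three inequalities then read off from one of these forms, each for a fixed decomposition $E$. The lower bound $S(\sigma_E,\sigma_0)\ge 0$ is immediate from the first form, since each $S(\Lambda^*E_n,\Lambda^*\rho)\ge 0$ by positivity of relative entropy. The bound $S(\sigma_E,\sigma_0)\le S(\Lambda^*\varphi)$ is immediate from the second form, since $\sum_n\lambda_n S(\Lambda^*E_n)\ge 0$ by nonnegativity of von Neumann entropy. The bound $S(\sigma_E,\sigma_0)\le S(\varphi)$ is equivalent, via the third form, to $S(\Lambda^*\varphi)\le S(\sigma_E)$; this I would obtain from the block-diagonal value $S(\sigma_E)=S(\varphi)+\sum_n\lambda_n S(\Lambda^*E_n)$ together with the standard mixing inequality $S\!\left(\sum_n\lambda_n\Lambda^*E_n\right)\le -\sum_n\lambda_n\log\lambda_n+\sum_n\lambda_n S(\Lambda^*E_n)$ applied to $\Lambda^*\rho=\sum_n\lambda_n\Lambda^*E_n$.

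Taking the supremum over $E$ preserves each inequality and yields $0\le I(\varphi\,;\Lambda^*)\le\min\{S(\varphi),S(\Lambda^*\varphi)\}$. The routine part is the trace computation of the first paragraph; the genuine obstacle is the bound by $S(\varphi)$, which is a Holevo-type estimate and is the only place where a nontrivial entropy inequality (the mixing, equivalently subadditivity, bound on the entropy of a convex combination) is needed rather than mere positivity. Care is also required that the identity derived in the discrete Schatten setting extends to the general $C^{*}$ formulation with Araki's relative entropy, which I would handle by passing through the representing density operators and invoking monotonicity of relative entropy.
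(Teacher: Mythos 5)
The paper itself offers no proof of this theorem --- it is quoted from Ohya \cite{compmutual} --- so there is no internal argument to compare against; I can only assess your proposal on its own terms. For the density-operator formulation stated immediately before the theorem, where $I(\rho\,;\Lambda^{*})=\sup_E S(\sigma_E,\sigma_0)$ with $E$ ranging over Schatten decompositions, your argument is correct and complete: orthogonality of the one-dimensional $E_n$ makes $\sigma_E$ block diagonal in the first factor, the identity $S(\sigma_E,\sigma_0)=\sum_n\lambda_n S(\Lambda^{*}E_n,\Lambda^{*}\rho)=S(\Lambda^{*}\rho)-\sum_n\lambda_n S(\Lambda^{*}E_n)=S(\rho)+S(\Lambda^{*}\rho)-S(\sigma_E)$ follows from the marginal computations you describe, and the three bounds then come from positivity of relative entropy, nonnegativity of von Neumann entropy, and the mixing (Holevo-type) inequality $S\bigl(\sum_n\lambda_n\Lambda^{*}E_n\bigr)\le-\sum_n\lambda_n\log\lambda_n+\sum_n\lambda_n S(\Lambda^{*}E_n)$, respectively. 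Since all of the paper's subsequent computations use exactly this Schatten-decomposition version, your proof covers everything the paper actually needs.

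The one place your sketch is too quick is the reduction of the general $C^{*}$-algebraic statement to this computation. In the general definition the supremum defining $I(\varphi\,;\Lambda^{*})$ runs over all extremal decomposition measures $\mu\in M_{\varphi}(\mathfrak{S}(\mathcal{A}))$, including non-orthogonal and continuous decompositions into pure states; for such $\mu$ the compound state $\Phi_{\mu}^{\mathcal{S}}$ is not block diagonal and your closed-form identity is unavailable, and ``passing through the representing density operators'' does not remove this obstruction. The standard way to handle general $\mu$ (essentially Ohya's original route) is the identity $\int S(\omega\otimes\Lambda^{*}\omega,\Phi_0)\,d\mu=\int S(\omega\otimes\Lambda^{*}\omega,\Phi_{\mu})\,d\mu+S(\Phi_{\mu},\Phi_0)$ together with monotonicity of relative entropy under the two partial traces, which gives $S(\Phi_{\mu},\Phi_0)\le\min\bigl\{\int S(\omega,\varphi)\,d\mu,\ \int S(\Lambda^{*}\omega,\Lambda^{*}\varphi)\,d\mu\bigr\}\le\min\{S(\varphi),S(\Lambda^{*}\varphi)\}$. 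If you restrict attention to the density-operator mutual entropy as defined in the paper, your proof stands as written; if you intend the full $C^{*}$ statement, you need this extra step.
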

\noindent This theorem implies that the amount of information correctly transmitted does not exeed the amount of information of the input and that of the output.


\section{Quantum Dynamical Mutual Entropy}
In this section, we briefly review some notions concerning the quantum dynamical entropy and quantum dynamical mutual entropy. These results are described in \cite{muraki}, \cite{mop}, \cite{some}, \cite{state}.\\
 
\noindent A stationary quantum information source is described by the triplet $(\mathcal{A}, \mathfrak{S}(\mathcal{A}), \theta_{\mathcal{A}})$ and a stationary state $\varphi$ with respect to the *-automorphism $\theta_{\mathcal{A}}$  on $\mathcal{A}$ (i.e. $\varphi \circ \theta_{\mathcal{A}} = \varphi$). Let  $(\mathcal{B}, \mathfrak{S}(\mathcal{B}), \theta_{\mathcal{B}})$ be an output $C^{*}$-dynamical system and  $\Lambda^{*} : \mathfrak{S}(\mathcal{A}) \to \mathfrak{S}(\mathcal{B})$ be a covariant channel which is a dual of a completely positive unital map (c.p.u. map for short) $\Lambda : \mathcal{B} \to \mathcal{A}$ such that $\Lambda \circ \theta_{\mathcal{B}} = \theta_{\mathcal{A}} \circ \Lambda$.\\
Now we construct compound states (\ref{true}) on the two dynamical systems. Let $\alpha^{M} = (\alpha_{1}, \alpha_{2}, \cdots , \alpha_{M})$ and $\beta^{N} = (\beta_{1}, \beta_{2}, \cdots , \beta_{N})$ be finite sequences of c.p.u. maps
$$
\alpha_{m} : \mathcal{A}_{m} \to \mathcal{A} \qquad,\qquad \beta_n : \mathcal{B}_n \to \mathcal{B}
$$
where $\mathcal{A}_m$ and $\mathcal{B}_n$ $(m = 1, \cdots , M,\ n = 1, \cdots , N)$ are finite dimensional unital $C^{*}$-algebras. Let $\mathcal{S}$ be a weak * convex subset of $\mathfrak{S}(\mathcal{A})$ and $\varphi$ be a state in $\mathcal{S}$. For $\alpha^M$ and an extremal decomposition measure $\mu$ of $\varphi$, we obtain the compound state of $\alpha_{1}^{*}\varphi , \alpha_{2}^{*}\varphi , \cdots , \alpha_{M}^{*} \varphi$  on the tensor product algebra $\bigotimes_{m=1}^{M}\mathcal{A}_m$ as
$$
\Phi_{\mu}^{\mathcal{S}}(\alpha^{M}) := \int_{\mathcal{S}} \bigotimes_{m=1}^{M} \alpha_{m}^{*}\omega d\mu (\omega).
$$
$\alpha : \mathcal{A}_0 \to \mathcal{A} $ (resp. $\beta : \mathcal{B}_0 \to \mathcal{B})$ denotes c.p.u. map from a finite dimensional unital $C^{*}$-algebra $\mathcal{A}_0$ (resp. $\mathcal{B}_0$) to $\mathcal{A}$ (resp. $\mathcal{B}$). Define
$$
\alpha^N := (\alpha , \theta_{\mathcal{A}}\circ \alpha , \cdots , \theta_{\mathcal{A}}^{N-1} \circ \alpha),
$$
$$
\beta_{\Lambda}^N := (\Lambda \circ \beta , \Lambda \circ \theta_{\mathcal{B}} \circ \beta , \cdots , \Lambda \circ \theta_{B}^{N-1} \circ \beta ).
$$
Similarly we have the compound states which represents the correlation of the states on the output $\bigotimes_{n=1}^N \mathcal{B}_n$:
$$
\Phi_{\mu}^{\mathcal{S}}(\beta^{N}) := \int_{\mathcal{S}}\bigotimes_{n=1}^{N} \beta_{n}^{*}\omega d\mu (\omega).
$$

\noindent Furthermore $\Phi_{\mu}^{\mathcal{S}}(\alpha^{M} \cup \beta^{N}) $ is a compound state of $\Phi_{\mu}^{\mathcal{S}}(\alpha^M)$ and $\Phi_{\mu}^{\mathcal{S}}(\beta^{N})$ with $ \alpha^M \cup \beta^{N} \equiv (\alpha_1 , \alpha_2 , \cdots , \alpha_M , \beta_1 , \beta_2 , \cdots , \beta_N)$ constructed as
$$
\Phi_{\mu}^{\mathcal{S}}(\alpha^{M} \cup \beta^{N}) := \int_{\mathcal{S}}(\bigotimes_{m=1}^{M}\alpha_{m}^{*} \omega) \otimes (\bigotimes_{n=1}^{N} \beta_{n}^{*} \omega) d\mu.
$$

\begin{definition}
For any pair $(\alpha^{M}, \beta^{N})$ and any extremal decomposition measure $\mu$ of $\varphi$, the entropy functional $S_{\mu}$ and the mutual entropy functional $I_{\mu}$ are defined by
$$
S_{\mu}^{\mathcal{S}}(\varphi \ ; \alpha^{M}) := \int_{\mathcal{S}} S(\bigotimes_{m=1}^{M} \alpha_{m}^{*}\omega, \Phi_{\mu}^{S}(\alpha^{M}))d\mu (\omega),
$$
$$
I_{\mu}^{\mathcal{S}}(\varphi \ ; \alpha^{M}, \beta^{N}) := S(\Phi_{\mu}^{\mathcal{S}}(\alpha^{M} \cup \beta^{N}), \Phi_{\mu}^{\mathcal{S}}(\alpha^{M})\otimes \Phi_{\mu}^{\mathcal{S}}(\beta^{N})),
$$
respectively, where $S(\cdot ,\cdot)$ is the Araki's relative entropy.
\end{definition}

\noindent Moreover, the functional $S^{\mathcal{S}}(\varphi \ ; \alpha^M)$ (resp. $I^{\mathcal{S}}(\varphi \ ; \alpha^{M}, \beta^{N}))$ is given by taking the supremum of $S_{\mu}^{\mathcal{S}}(\varphi \ ; \alpha^{M})$ (resp. $I_{\mu}^{\mathcal{S}}(\varphi \ ; \alpha^{M}, \beta^{N})$) for all possible extremal decompositions $\mu$ of $\varphi$ :
$$
S^{\mathcal{S}}(\varphi \ ; \alpha^M) := \sup \{ S_{\mu}^{\mathcal{S}}(\varphi \ ; \alpha^{M})\ ;\ \mu \in M_{\varphi} (\mathcal{S})\},
$$
$$
I^{\mathcal{S}}(\varphi \ ; \alpha^{M}, \beta^{N}) := \sup \{  I_{\mu}^{\mathcal{S}}(\varphi \ ; \alpha^{M}, \beta^{N})\ ;\ \mu \in M_{\varphi}(\mathcal{S}) \} .
$$

\noindent Under the above notations, $\tilde{S}^{\mathcal{S}}(\varphi \ ; \theta_{\mathcal{A}}, \alpha)$ and $\tilde{I}^{\mathcal{S}}(\varphi \ ; \Lambda^{*}, \theta_{\mathcal{A}}, \theta_{\mathcal{B}}, \alpha , \beta)$ are given by
$$
\tilde{S}^{\mathcal{S}}(\varphi \ ; \theta_{\mathcal{A}}, \alpha) :=  \lim_{N \to \infty} \frac{1}{N}S^{\mathcal{S}}(\varphi \ ; \alpha^{N}),
$$
$$
\tilde{I}^{\mathcal{S}}(\varphi \ ; \Lambda^{*}, \theta_{\mathcal{A}}, \theta_{\mathcal{B}}, \alpha , \beta) := \lim_{N \to \infty} \frac{1}{N} I^{\mathcal{S}}(\varphi \ ; \alpha^{M}, \beta^{N}).
$$

\begin{definition}
The {\it quantum dynamical entropy} $\tilde{S}^{\mathcal{S}}(\varphi \ ; \theta_{\mathcal{A}})$ and the {\it quantum dynamical mutual entropy} $\tilde{I}^{\mathcal{S}}(\varphi \ ; \Lambda^{*}, \theta_{\mathcal{A}}, \theta_{\mathcal{B}})$ are defined by taking the supremum for all possible $\mathcal{A}_0$'s, $\alpha$'s, $\mathcal{B}_0$'s, and $\beta$'s : 
\begin{eqnarray}
\tilde{S}^{\mathcal{S}}(\varphi\  ; \theta_{\mathcal{A}}) &:=& \sup_{\alpha} \tilde{S}^{\mathcal{S}}(\varphi \ ; \theta_{\mathcal{A}}, \alpha), \\
\tilde{I}^{\mathcal{S}}(\varphi \ ; \Lambda^{*}, \theta_{\mathcal{A}}, \theta_{\mathcal{B}}) &:=& \sup_{\alpha , \beta}\tilde{I}^{\mathcal{S}}(\varphi \ ; \Lambda^{*}, \theta_{\mathcal{A}}, \theta_{\mathcal{B}}, \alpha , \beta).
\end{eqnarray}
\end{definition}
\noindent Then the fundamental inequalities (\ref{fund}) holds for $\tilde{S}^{\mathcal{S}}(\varphi \ ; \theta_{\mathcal{A}})$ and $\tilde{I}^{\mathcal{S}}(\varphi \ ; \Lambda^{*}, \theta_{\mathcal{A}}, \theta_{\mathcal{B}})$. .

\begin{proposition}
\begin{equation}\label{fund2}
0 \le \tilde{I}^{\mathcal{S}}(\varphi \ ; \Lambda^{*}, \theta_{\mathcal{A}} ,\theta_{\mathcal{B}}) \le \min \{ \tilde{S}^{\mathcal{S}}(\varphi \ ; \theta_{\mathcal{A}}), \tilde{S}^{\mathcal{S}}(\Lambda^{*}\varphi \ ; \theta_{\mathcal{B}}) \} .
\end{equation}
\end{proposition}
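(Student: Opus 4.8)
The plan is to reduce the Proposition to a single finite-step inequality at the level of the $\mu$-functionals and then let the suprema and the normalized limits act. Concretely, I would first establish that for every pair $(\alpha^M,\beta^N)$ and every extremal decomposition measure $\mu\in M_\varphi(\mathcal{S})$,
\[
0\le I_\mu^{\mathcal{S}}(\varphi\,;\alpha^M,\beta^N)\le\min\{S_\mu^{\mathcal{S}}(\varphi\,;\alpha^M),\,S_\mu^{\mathcal{S}}(\Lambda^{*}\varphi\,;\beta^N)\}.
\]
Granting this finite-step bound, the Proposition follows formally. The left inequality survives taking $\sup_\mu$, multiplying by $1/N$, and passing to the limit, giving $\tilde I^{\mathcal{S}}\ge 0$. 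For the right inequality, the functional bound $I_\mu^{\mathcal{S}}\le S_\mu^{\mathcal{S}}(\varphi\,;\alpha^M)$ holds for every $\mu$, so taking $\sup_\mu$ on both sides yields $I^{\mathcal{S}}(\varphi\,;\alpha^M,\beta^N)\le S^{\mathcal{S}}(\varphi\,;\alpha^M)$; dividing by $N$ and letting $N\to\infty$ gives $\tilde I^{\mathcal{S}}(\varphi\,;\Lambda^{*},\theta_{\mathcal{A}},\theta_{\mathcal{B}},\alpha,\beta)\le\tilde S^{\mathcal{S}}(\varphi\,;\theta_{\mathcal{A}},\alpha)\le\tilde S^{\mathcal{S}}(\varphi\,;\theta_{\mathcal{A}})$, and the symmetric chain bounds it by $\tilde S^{\mathcal{S}}(\Lambda^{*}\varphi\,;\theta_{\mathcal{B}})$. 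Taking $\sup_{\alpha,\beta}$ of the left-hand side preserves both estimates, since the right-hand sides no longer depend on $\alpha,\beta$. Hence the entire burden is the finite-step inequality.

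Nonnegativity there is immediate from the positivity of Araki's relative entropy, as $I_\mu^{\mathcal{S}}$ is by definition the relative entropy of two states. For the upper bound, the key structural observation is that the joint compound state $\Phi_\mu^{\mathcal{S}}(\alpha^M\cup\beta^N)=\int_{\mathcal{S}}(\bigotimes_m\alpha_m^{*}\omega)\otimes(\bigotimes_n\beta_n^{*}\omega)\,d\mu$ is \emph{conditionally a product} over the extremal label $\omega$; that is, the $\alpha$-block, the decomposition index, and the $\beta$-block form a quantum Markov triple. I would then argue exactly as in Ohya's proof of the fundamental inequalities (\ref{fund}): $I_\mu^{\mathcal{S}}(\varphi\,;\alpha^M,\beta^N)$ is the mutual information of the joint compound state relative to the product of its two marginals, while $S_\mu^{\mathcal{S}}(\varphi\,;\alpha^M)=\int_{\mathcal{S}}S(\bigotimes_m\alpha_m^{*}\omega,\Phi_\mu^{\mathcal{S}}(\alpha^M))\,d\mu$ is precisely the mutual information between the $\alpha$-block and the classical label. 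Two applications of the monotonicity of relative entropy under completely positive maps then close the argument: one under the recovery/broadcasting map that reattaches the $\beta$-block to the label, and one under the partial trace discarding the label, yielding $I_\mu^{\mathcal{S}}(\varphi\,;\alpha^M,\beta^N)\le S_\mu^{\mathcal{S}}(\varphi\,;\alpha^M)$, and symmetrically the bound by $S_\mu^{\mathcal{S}}(\Lambda^{*}\varphi\,;\beta^N)$. In the density-operator case relevant to Sec. 5 this is transparent: writing $\mu=\sum_k\lambda_k\delta_{\varphi_k}$ gives the Holevo-type identity $S_\mu^{\mathcal{S}}(\varphi\,;\alpha^M)=S(\Phi_\mu^{\mathcal{S}}(\alpha^M))-\sum_k\lambda_k S(\bigotimes_m\alpha_m^{*}\varphi_k)$, and the finite-step inequality reduces to the data-processing inequality for the Markov chain.

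I expect the main obstacle to be making this Markov/data-processing argument fully rigorous in the general $C^{*}$-algebraic setting rather than in the finite-dimensional picture: one must realize the classical label as a genuine, possibly non-atomic, abelian subsystem when $\mu$ is only pseudosupported on $\mathrm{ex}\,\mathcal{S}$, build the recovery map inside the appropriate von Neumann algebra, and invoke monotonicity of Araki's relative entropy in that generality (plain joint convexity only delivers the weaker sum bound $S_\mu^{\mathcal{S}}(\varphi\,;\alpha^M)+S_\mu^{\mathcal{S}}(\Lambda^{*}\varphi\,;\beta^N)$, so the conditional-product structure is essential for the sharp $\min$). A secondary technical point, which I would either cite from \cite{muraki} or dispatch by a Fekete-type superadditivity argument along the stationary sequences $\alpha^N$ and $\beta_\Lambda^N$, is the existence of the limits $\lim_N\frac1N S^{\mathcal{S}}(\varphi\,;\alpha^N)$ and $\lim_N\frac1N I^{\mathcal{S}}(\varphi\,;\alpha^N,\beta^N)$ defining $\tilde S^{\mathcal{S}}$ and $\tilde I^{\mathcal{S}}$; once existence is secured, the inequalities pass to the limit exactly as in the reduction above.
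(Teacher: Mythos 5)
The paper does not actually prove this Proposition: it is imported from \cite{muraki}, \cite{some}, \cite{state} with only the remark that the fundamental inequalities (\ref{fund}) carry over to the dynamical functionals, so there is no in-paper argument to compare against. Your outline reproduces what those sources do: nonnegativity is positivity of Araki's relative entropy; the finite-stage bound $I_\mu^{\mathcal{S}}(\varphi\,;\alpha^M,\beta^N)\le\min\{S_\mu^{\mathcal{S}}(\varphi\,;\alpha^M),S_\mu^{\mathcal{S}}(\Lambda^{*}\varphi\,;\beta^N)\}$ is the Holevo/data-processing estimate exploiting that $\Phi_\mu^{\mathcal{S}}(\alpha^M\cup\beta^N)$ is conditionally a product over the decomposition label; and the passage to $\tilde S$, $\tilde I$ is the formal monotonicity of $\sup_\mu$, $\sup_{\alpha,\beta}$ and $\lim_N\frac1N$ that you describe. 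Two of the points you flag are exactly where the real work lies, and you should not understate them. First, for the output-side bound the measure entering $S_\mu^{\mathcal{S}}(\Lambda^{*}\varphi\,;\beta^N)$ is the pushforward of $\mu$ under $\Lambda^{*}$, and $\Lambda^{*}\omega$ need not be extremal in the output state space, so this pushforward need not lie in $M_{\Lambda^{*}\varphi}(\cdot)$ as defined; one must either enlarge the admissible class of decomposition measures on the output side (as Ohya does) or insert an approximation step, and your sketch silently identifies the two. Second, existence of the limit defining $\tilde I$ is not delivered by a routine Fekete argument (subadditivity of $I^{\mathcal{S}}(\varphi\,;\alpha^N,\beta_\Lambda^N)$ in $N$ is not obvious, unlike for $S^{\mathcal{S}}$), and is one of the main theorems of \cite{muraki} rather than a dispatchable technicality. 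Neither issue breaks the plan, but as written the proposal is a correct skeleton whose load-bearing steps are precisely the ones deferred.
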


\noindent Furthermore, it is known that $\tilde{S}^{\mathcal{S}}(\varphi \ ; \theta_{\mathcal{A}})$ and $\tilde{I}^{\mathcal{S}}(\varphi \ ; \Lambda^{*}, \theta_{\mathcal{A}} ,\theta_{\mathcal{B}})$ include the {\it Kolmogorov entropy} (or {\it Kolmogorov-Sinai entropy}) as the special case.

\begin{proposition}
If $\mathcal{A}_k, \mathcal{A}$ are commutative $C^{*}$-algebras and each $\alpha_k$ is an embedding, then our functionals coincide with the classical cases: 
\begin{eqnarray}
S_{\mu}^{\mathcal{S}(\mathcal{A})}(\varphi \ ; \alpha^{M}) &=& S_{\mu}^{classical}(\bigvee_{m=1}^{M} \tilde{A}_m), \\
I_{\mu}^{\mathcal{S}(\mathcal{A})}(\varphi \ ; \alpha^{M}, \beta_{id}^{N}) &=& I_{\mu}^{classical}(\bigvee_{m=1}^{M}\tilde{A}_m , \bigvee_{n=1}^{N}\tilde{B}_n)
\end{eqnarray}
for any finite partitions $\tilde{A}_m , \tilde{B}_n$ of a probability space $(\Omega , \mathcal{F}, \varphi).$
\end{proposition}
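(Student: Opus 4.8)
The plan is to pass through the Gelfand--Naimark identification of commutative $C^*$-algebras with function algebras, translate the embeddings into measurable partitions, and then verify that in this setting both Araki's relative entropy and the compound-state construction collapse to their classical counterparts. Concretely, I would realize $\mathcal{A}$ as $L^\infty(\Omega, \mathcal{F}, \varphi)$ so that $\varphi$ becomes integration against the probability measure and the extreme points $\omega \in \mathrm{ex}\,\mathfrak{S}(\mathcal{A})$ become the characters, i.e. point evaluations at $x \in \Omega$. Because a commutative state space is a Choquet simplex, the extremal decomposition measure $\mu$ of $\varphi$ is canonical and, under the identification $\mathrm{ex}\,\mathfrak{S}(\mathcal{A}) \cong \Omega$, pushes forward to $\varphi$ itself on $\Omega$. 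Since each finite-dimensional commutative $\mathcal{A}_m \cong \mathbb{C}^{n_m}$ is embedded by $\alpha_m$, its minimal projections are carried to indicator functions $\mathbf{1}_{A_m^{(i)}}$ of a finite measurable partition $\tilde A_m = \{A_m^{(1)}, \dots, A_m^{(n_m)}\}$ of $\Omega$, so for a character $\omega$ at a point $x$ the state $\alpha_m^*\omega$ is the extreme state of $\mathcal{A}_m$ selecting the unique cell of $\tilde A_m$ containing $x$.

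Next I would compute the compound states explicitly. For fixed $x$, the product state $\bigotimes_{m=1}^M \alpha_m^*\omega$ is the pure state of $\bigotimes_m \mathcal{A}_m \cong \mathbb{C}^{n_1\cdots n_M}$ concentrated on the tuple $(i_1,\dots,i_M)$ recording which cell of each $\tilde A_m$ contains $x$, and such a tuple is exactly a cell of the refined partition $\bigvee_{m=1}^M \tilde A_m$. Integrating against $\mu$ then shows $\Phi_\mu^{\mathcal{S}}(\alpha^M)$ is the probability vector assigning to each refined cell $C = A_1^{(i_1)} \cap \cdots \cap A_M^{(i_M)}$ the mass $p(C) = \varphi(C)$. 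The key computational fact I would invoke is that on a commutative (here finite-dimensional) algebra Araki's relative entropy reduces to the classical Kullback--Leibler divergence; in particular the relative entropy of the point mass $\bigotimes_m \alpha_m^*\omega$ against the distribution $\Phi_\mu^{\mathcal{S}}(\alpha^M)$ equals $-\log p(C)$ for the cell $C$ containing $x$. Substituting this into the definition of $S_\mu^{\mathcal{S}}$ and integrating over $x$ with respect to $\varphi$ turns the functional into $-\sum_C p(C)\log p(C)$, which is the classical Shannon entropy $S_\mu^{classical}(\bigvee_{m}\tilde A_m)$, giving the first identity.

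For the mutual-entropy identity I would note that the subscript $id$ means the channel $\Lambda$ is the identity, so each $\beta_{\Lambda,n}^*\omega$ reduces to the cell of $\tilde B_n$ determined by the same point $x$; hence $\Phi_\mu^{\mathcal{S}}(\alpha^M \cup \beta_{id}^N)$ is the joint distribution of the two refined partitions $\bigvee_m \tilde A_m$ and $\bigvee_n \tilde B_n$, while $\Phi_\mu^{\mathcal{S}}(\alpha^M)\otimes \Phi_\mu^{\mathcal{S}}(\beta^N)$ is the product of their marginals. Applying once more the commutative reduction of Araki's relative entropy, $I_\mu^{\mathcal{S}}$ becomes the relative entropy of a joint law against the product of its marginals, which is precisely the classical mutual information $I_\mu^{classical}(\bigvee_m \tilde A_m, \bigvee_n \tilde B_n)$. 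I expect the main obstacle to be the bookkeeping of the second step: making rigorous that $\mu$ coincides with the measure $\varphi$ on $\Omega$ and that $\bigotimes_m \alpha_m^*\omega$ integrates cell by cell to the joint distribution on $\bigvee_m \tilde A_m$ --- that is, that the noncommutative tensor--integral construction genuinely realizes the classical partition refinement --- whereas the reduction of the relative entropy and the resulting Shannon-entropy computation are then routine.
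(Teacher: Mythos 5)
The paper states this proposition without proof, citing the Muraki--Ohya line of references, and your blind argument is precisely the standard one found there and is correct in outline: Gelfand duality turns extreme states of a commutative $\mathcal{A}$ into point evaluations, embeddings of finite-dimensional commutative $\mathcal{A}_m$ into finite partitions $\tilde A_m$, the maximal decomposition measure (unique because the state space of a commutative $C^*$-algebra is a Bauer simplex) into $\varphi$ itself, and Araki's relative entropy on finite-dimensional commutative algebras into Kullback--Leibler divergence, so that $S(\bigotimes_m\alpha_m^*\omega,\Phi_\mu^{\mathcal{S}}(\alpha^M))=-\log p(C(x))$ integrates to the Shannon entropy of $\bigvee_{m}\tilde A_m$ and $I_\mu$ becomes the relative entropy of the joint law of $\bigvee_m\tilde A_m$ and $\bigvee_n\tilde B_n$ against the product of its marginals, i.e.\ classical mutual information. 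The one technical point you rightly flag --- that for $\mathcal{A}=L^\infty(\Omega,\mathcal{F},\varphi)$ the characters live on the (hyperstonean) spectrum rather than on $\Omega$ itself, so the identification of $\mu$ with $\varphi$ must be made at the level of the spectrum or via a $C(X)$ model --- is routine and does not affect the computation.
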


\noindent Moreover, the following Kolmogorov-Sinai type convergence theorems hold.

\begin{theorem}
Let $\alpha_m$ be a sequence of c.p. maps $\alpha_m : \mathcal{A}_m \to \mathcal{A} $ and $\beta_m : \mathcal{B}_m \to \mathcal{B} $ such that there exist c.p. maps $\alpha'_m : \mathcal{A} \to \mathcal{A}_m$ satisfying $\alpha_m \circ \alpha'_m \to id_{\mathcal{A}}$ in the pointwise topology. Then there holds:
\begin{equation}
\tilde{S}^{\mathcal{S}}(\varphi \ ; \theta_{\mathcal{A}}) = \lim_{m \to \infty}\tilde{S}^{\mathcal{S}}(\varphi \ ; \theta_{\mathcal{A}}, \alpha_m).
\end{equation}
\end{theorem}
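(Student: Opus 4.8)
The plan is to establish the two inequalities between $\tilde{S}^{\mathcal{S}}(\varphi \ ; \theta_{\mathcal{A}})$ and $\lim_{m\to\infty}\tilde{S}^{\mathcal{S}}(\varphi \ ; \theta_{\mathcal{A}}, \alpha_m)$ separately. One direction is immediate from the supremum definition, while the other is an approximation argument driven by the hypothesis $\alpha_m \circ \alpha'_m \to id_{\mathcal{A}}$. For the easy direction, observe that since $\tilde{S}^{\mathcal{S}}(\varphi \ ; \theta_{\mathcal{A}}) = \sup_{\alpha}\tilde{S}^{\mathcal{S}}(\varphi \ ; \theta_{\mathcal{A}}, \alpha)$, each term of the sequence obeys $\tilde{S}^{\mathcal{S}}(\varphi \ ; \theta_{\mathcal{A}}, \alpha_m) \le \tilde{S}^{\mathcal{S}}(\varphi \ ; \theta_{\mathcal{A}})$, and hence $\limsup_{m\to\infty}\tilde{S}^{\mathcal{S}}(\varphi \ ; \theta_{\mathcal{A}}, \alpha_m) \le \tilde{S}^{\mathcal{S}}(\varphi \ ; \theta_{\mathcal{A}})$.

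For the reverse inequality it suffices to fix an arbitrary c.p.u. map $\gamma : \mathcal{A}_0 \to \mathcal{A}$ from a finite-dimensional algebra and show $\tilde{S}^{\mathcal{S}}(\varphi \ ; \theta_{\mathcal{A}}, \gamma) \le \liminf_{m\to\infty}\tilde{S}^{\mathcal{S}}(\varphi \ ; \theta_{\mathcal{A}}, \alpha_m)$, because taking the supremum over all such $\gamma$ then recovers $\tilde{S}^{\mathcal{S}}(\varphi \ ; \theta_{\mathcal{A}})$ on the left. The device is to route $\gamma$ through $\mathcal{A}_m$ by setting $\gamma_m := \alpha_m \circ (\alpha'_m \circ \gamma)$. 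Here $\alpha'_m \circ \gamma : \mathcal{A}_0 \to \mathcal{A}_m$ is c.p., so $\gamma_m$ factors through $\alpha_m$, while the hypothesis gives $\gamma_m = (\alpha_m \circ \alpha'_m)\circ \gamma \to \gamma$ pointwise.

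Two structural properties of the entropy functional then close the gap. The first is monotonicity under subordination: writing $\gamma_m = \alpha_m \circ \delta_m$ with $\delta_m = \alpha'_m \circ \gamma$, covariance gives $\theta_{\mathcal{A}}^{k}\circ\gamma_m = (\theta_{\mathcal{A}}^{k}\circ\alpha_m)\circ\delta_m$, so for every $N$ the tuple $\gamma_m^{N}$ is subordinate to $\alpha_m^{N}$ through the single map $\delta_m$. Monotonicity of Araki's relative entropy under this coarse-graining yields $S^{\mathcal{S}}(\varphi \ ; \gamma_m^{N}) \le S^{\mathcal{S}}(\varphi \ ; \alpha_m^{N})$, and dividing by $N$ and letting $N\to\infty$ gives $\tilde{S}^{\mathcal{S}}(\varphi \ ; \theta_{\mathcal{A}}, \gamma_m) \le \tilde{S}^{\mathcal{S}}(\varphi \ ; \theta_{\mathcal{A}}, \alpha_m)$. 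The second is a lower semicontinuity property in the generating map: since $\mathcal{A}_0$ is finite-dimensional and $\gamma_m \to \gamma$ pointwise, I would show $\tilde{S}^{\mathcal{S}}(\varphi \ ; \theta_{\mathcal{A}}, \gamma) \le \liminf_{m\to\infty}\tilde{S}^{\mathcal{S}}(\varphi \ ; \theta_{\mathcal{A}}, \gamma_m)$. Chaining the two estimates gives $\tilde{S}^{\mathcal{S}}(\varphi \ ; \theta_{\mathcal{A}}, \gamma) \le \liminf_{m\to\infty}\tilde{S}^{\mathcal{S}}(\varphi \ ; \theta_{\mathcal{A}}, \alpha_m)$; taking the supremum over $\gamma$ and combining with the easy direction shows the limit exists and equals $\tilde{S}^{\mathcal{S}}(\varphi \ ; \theta_{\mathcal{A}})$.

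The hard part will be the lower semicontinuity step. Relative entropy is only lower semicontinuous, so $\gamma_m \to \gamma$ cannot be passed through the entropy by a naive continuity argument; instead I would need a Fannes-type perturbation bound controlling $|S^{\mathcal{S}}(\varphi \ ; \gamma^{N}) - S^{\mathcal{S}}(\varphi \ ; \gamma_m^{N})|$ whose size is at most $N\,\varepsilon_m$ with $\varepsilon_m \to 0$ independently of $N$, so that the error survives division by $N$ in the $N\to\infty$ limit. Securing such a uniform-in-$N$ estimate, using the finite dimensionality of $\mathcal{A}_0$ and the explicit form of the compound states $\Phi_{\mu}^{\mathcal{S}}(\gamma^{N})$, is the delicate technical point on which the whole argument rests.
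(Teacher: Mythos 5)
The paper itself offers no proof of this theorem: it is recalled from Muraki--Ohya \cite{muraki} (see also \cite{mop}, \cite{some}, \cite{state}) and stated without argument, so there is no in-paper proof to compare against. Measured against the standard proof in those references, your architecture is the right one: the inequality $\limsup_m \tilde{S}^{\mathcal{S}}(\varphi\ ;\theta_{\mathcal{A}},\alpha_m)\le\tilde{S}^{\mathcal{S}}(\varphi\ ;\theta_{\mathcal{A}})$ is indeed immediate from the supremum, and the reverse direction is correctly organized around factoring an arbitrary $\gamma$ as $\gamma_m=\alpha_m\circ(\alpha'_m\circ\gamma)$, invoking monotonicity of the entropy functional under inner composition with a c.p.\ map (which follows from Uhlmann monotonicity of the relative entropy applied to $\otimes_i\delta_m^{*}$ acting on both arguments of $S(\otimes_i\gamma_{m,i}^{*}\omega,\Phi_{\mu}^{\mathcal{S}}(\gamma_m^{N}))$), and then a continuity estimate in the generating map. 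Two minor points you should still address: the hypothesis only makes $\alpha'_m$ c.p., not unital, so $\delta_m=\alpha'_m\circ\gamma$ need not be a legitimate c.p.u.\ map and must be corrected (e.g.\ normalized using $\alpha_m(\alpha'_m(1))\to 1$); and you should note that pointwise convergence plus finite-dimensionality of $\mathcal{A}_0$ upgrades $\gamma_m\to\gamma$ to convergence uniform on the unit ball of $\mathcal{A}_0$, which is what the continuity lemma actually consumes.

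The genuine gap is the one you name yourself: the uniform-in-$N$ estimate $|S^{\mathcal{S}}(\varphi\ ;\gamma^{N})-S^{\mathcal{S}}(\varphi\ ;\gamma_m^{N})|\le N\varepsilon_m$ is not a routine detail to be filled in later --- it is the entire technical content of the theorem, and without it nothing has been proved. Be aware that the obvious route fails: a Fannes bound applied to the states $\Phi_{\mu}^{\mathcal{S}}(\gamma^{N})$ and $\Phi_{\mu}^{\mathcal{S}}(\gamma_m^{N})$ on the $d^{N}$-dimensional algebra $\otimes^{N}\mathcal{A}_0$ is useless, because the trace distance between $N$-fold products of states that are pairwise $\varepsilon_m$-close is not small (it tends to $1$ as $N\to\infty$), so the error is not $N\varepsilon_m$ but uncontrolled. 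The correct device, which your sketch does not contain, is a telescoping replacement of one tensor factor at a time combined with subadditivity of the entropy functional in its tuple argument, so that each single-site replacement costs at most $f(\varepsilon_m,\dim\mathcal{A}_0)$ uniformly in $N$ and in the decomposition measure $\mu$; summing the $N$ replacements gives the $N\varepsilon_m$ bound that survives division by $N$. Until that lemma is stated and proved, the proposal is a correct reduction of the theorem to its hardest step rather than a proof.
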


\begin{theorem}
Let $\alpha_m$ and $\beta_m$ be sequences of c.p. maps $\alpha_m : \mathcal{A}_m \to \mathcal{A}$ and $\beta_m : \mathcal{B}_m \to \mathcal{B}$ such that there exist c.p. maps $\alpha'_m : \mathcal{A} \to \mathcal{A}_m$ and $\beta'_m : \mathcal{B} \to \mathcal{B}_m$ satisfying $\alpha_m \circ \alpha'_m \to id_{\mathcal{A}}$ and $\beta_m \circ \beta'_m \to id_{\mathcal{B}}$ in the pointwise topology. Then one obtain
\begin{equation}
\tilde{I}^{\mathcal{S}}(\varphi \ ; \Lambda^{*}, \theta_{\mathcal{A}}, \theta_{\mathcal{B}}) = \lim_{m \to \infty} \tilde{I}^{\mathcal{S}}(\varphi \ ; \Lambda^{*}, \theta_{\mathcal{A}}, \theta_{\mathcal{B}}, \alpha_m , \beta_m).
\end{equation}
\end{theorem}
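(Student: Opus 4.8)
The plan is to prove the equality by establishing the two opposite inequalities relating $\tilde{I}^{\mathcal{S}}(\varphi \ ; \Lambda^{*}, \theta_{\mathcal{A}}, \theta_{\mathcal{B}})$ to $\liminf_{m}$ and $\limsup_{m}$ of the sequence $\tilde{I}^{\mathcal{S}}(\varphi \ ; \Lambda^{*}, \theta_{\mathcal{A}}, \theta_{\mathcal{B}}, \alpha_m, \beta_m)$, following the same scheme as the preceding convergence theorem for $\tilde{S}^{\mathcal{S}}$ but now controlling the input sequence $\alpha_m$ and the output sequence $\beta_m$ simultaneously. Since $\tilde{I}^{\mathcal{S}}(\varphi \ ; \Lambda^{*}, \theta_{\mathcal{A}}, \theta_{\mathcal{B}})$ is \emph{defined} as the supremum of $\tilde{I}^{\mathcal{S}}(\varphi \ ; \Lambda^{*}, \theta_{\mathcal{A}}, \theta_{\mathcal{B}}, \alpha, \beta)$ over all admissible $\alpha, \beta$, the bound $\limsup_{m \to \infty}\tilde{I}^{\mathcal{S}}(\varphi \ ; \Lambda^{*}, \theta_{\mathcal{A}}, \theta_{\mathcal{B}}, \alpha_m, \beta_m) \le \tilde{I}^{\mathcal{S}}(\varphi \ ; \Lambda^{*}, \theta_{\mathcal{A}}, \theta_{\mathcal{B}})$ is immediate, as each term of the sequence is one of the elements over which the supremum is taken. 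The whole content therefore lies in the reverse inequality.

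For the reverse inequality I would fix arbitrary c.p.u. maps $\gamma : \mathcal{A}_0 \to \mathcal{A}$ and $\delta : \mathcal{B}_0 \to \mathcal{B}$ from finite dimensional algebras and aim to show $\tilde{I}^{\mathcal{S}}(\varphi \ ; \Lambda^{*}, \theta_{\mathcal{A}}, \theta_{\mathcal{B}}, \gamma, \delta) \le \liminf_{m \to \infty}\tilde{I}^{\mathcal{S}}(\varphi \ ; \Lambda^{*}, \theta_{\mathcal{A}}, \theta_{\mathcal{B}}, \alpha_m, \beta_m)$; taking the supremum over $\gamma, \delta$ then yields the claim. The device is to \emph{factor} $\gamma$ and $\delta$ through the sequence maps: setting $\gamma_m := \alpha_m \circ (\alpha'_m \circ \gamma)$ and $\delta_m := \beta_m \circ (\beta'_m \circ \delta)$, the hypotheses $\alpha_m \circ \alpha'_m \to \mathrm{id}_{\mathcal{A}}$ and $\beta_m \circ \beta'_m \to \mathrm{id}_{\mathcal{B}}$ in the pointwise topology give $\gamma_m \to \gamma$ and $\delta_m \to \delta$ pointwise, while each $\gamma_m$ factors through $\alpha_m$ (via the completely positive map $\alpha'_m \circ \gamma : \mathcal{A}_0 \to \mathcal{A}_m$) and each $\delta_m$ through $\beta_m$.

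Two ingredients then close the argument. First, \textbf{monotonicity under factorization}: because the compound states built from $\gamma_m, \delta_m$ are images, under the dual coarse-graining maps, of those built from $\alpha_m, \beta_m$, monotonicity of Araki's relative entropy under completely positive (unital) maps gives, at every finite block length $N$ and hence after the normalized dynamical limit $\tfrac1N\lim_N$, the inequality $\tilde{I}^{\mathcal{S}}(\varphi \ ; \Lambda^{*}, \theta_{\mathcal{A}}, \theta_{\mathcal{B}}, \gamma_m, \delta_m) \le \tilde{I}^{\mathcal{S}}(\varphi \ ; \Lambda^{*}, \theta_{\mathcal{A}}, \theta_{\mathcal{B}}, \alpha_m, \beta_m)$. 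Second, \textbf{lower semicontinuity}: since $\gamma_m \to \gamma$ and $\delta_m \to \delta$ pointwise, the associated compound states converge in the relevant weak* sense, and lower semicontinuity of relative entropy yields $\tilde{I}^{\mathcal{S}}(\varphi \ ; \Lambda^{*}, \theta_{\mathcal{A}}, \theta_{\mathcal{B}}, \gamma, \delta) \le \liminf_{m \to \infty}\tilde{I}^{\mathcal{S}}(\varphi \ ; \Lambda^{*}, \theta_{\mathcal{A}}, \theta_{\mathcal{B}}, \gamma_m, \delta_m)$. Chaining the two gives $\tilde{I}^{\mathcal{S}}(\varphi \ ; \Lambda^{*}, \theta_{\mathcal{A}}, \theta_{\mathcal{B}}, \gamma, \delta) \le \liminf_{m}\tilde{I}^{\mathcal{S}}(\varphi \ ; \Lambda^{*}, \theta_{\mathcal{A}}, \theta_{\mathcal{B}}, \alpha_m, \beta_m)$, and the supremum over $\gamma, \delta$ finishes the reverse inequality.

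The step I expect to be the main obstacle is the interchange of the two limiting procedures: the inner normalized dynamical limit $\tfrac1N \lim_N$ defining each $\tilde{I}^{\mathcal{S}}(\cdots, \alpha, \beta)$ and the outer $m \to \infty$ limit over the approximating maps. The lower semicontinuity of the mutual entropy functional must hold \emph{uniformly enough} in $N$ that it survives the $\tfrac1N \lim_N$, since pointwise convergence $\gamma_m \to \gamma$ only directly controls the finite blocks $\gamma_m^{N}$; one must verify that the semicontinuity does not degrade as $N$ grows. This is precisely where the completely positive unital (and asymptotically unital) structure of all maps in play is essential, guaranteeing that the dual maps send states to states and act contractively on relative entropy, and where the companion theorem for $\tilde{S}^{\mathcal{S}}$ serves as the template for the estimate.
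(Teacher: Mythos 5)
The paper itself states this theorem without proof, deferring to Muraki--Ohya \cite{muraki}, so there is no in-paper argument to compare against line by line; I can only judge your proposal against the standard proof scheme for Kolmogorov--Sinai type theorems in the CNT/Ohya framework. Your architecture is the right one: the $\limsup$ bound is indeed immediate from the definition of $\tilde{I}^{\mathcal{S}}(\varphi;\Lambda^*,\theta_{\mathcal{A}},\theta_{\mathcal{B}})$ as a supremum, the factorization $\gamma_m=\alpha_m\circ(\alpha'_m\circ\gamma)$, $\delta_m=\beta_m\circ(\beta'_m\circ\delta)$ is the correct device, and monotonicity of Araki's relative entropy under the dual of a c.p. map does give $I^{\mathcal{S}}_{\mu}(\varphi;\gamma_m^N,\delta_m^N)\le I^{\mathcal{S}}_{\mu}(\varphi;\alpha_m^N,\beta_m^N)$ for each $\mu$ and each $N$, hence after the supremum over $\mu$.

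The genuine gap is exactly the step you flag and then leave unresolved: lower semicontinuity of the relative entropy at \emph{fixed} $N$ only yields $I^{\mathcal{S}}(\varphi;\gamma^N,\delta^N)\le\liminf_m I^{\mathcal{S}}(\varphi;\alpha_m^N,\beta_m^N)$, and dividing by $N$ and letting $N\to\infty$ produces $\lim_N\frac1N\liminf_m(\cdot)$, which does not in general dominate or get dominated by $\liminf_m\lim_N\frac1N(\cdot)$; there is no abstract inequality that permits this interchange, so the argument does not close as written. The missing ingredient is a \emph{quantitative} continuity estimate for the entropy functionals of CNT/Muraki--Ohya type: if two c.p.u. maps from the same finite-dimensional algebra are $\varepsilon$-close in the seminorm $\|\cdot\|_{\varphi}$ induced by the state, then replacing one argument of the $N$-fold functional by the other changes its value by at most a constant $\delta(\varepsilon)$ \emph{independent of the remaining arguments}. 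Because $\varphi$ is $\theta_{\mathcal{A}}$-invariant, the $k$-th slot $\theta_{\mathcal{A}}^{k}\circ\gamma$ versus $\theta_{\mathcal{A}}^{k}\circ\gamma_m$ is at the same $\|\cdot\|_{\varphi}$-distance for every $k$, so the total error over $N$ slots is at most $N\,\delta(\varepsilon_m)$ with $\varepsilon_m\to0$ (pointwise convergence of $\alpha_m\circ\alpha'_m$ to the identity suffices here since $\mathcal{A}_0$ is finite-dimensional). Only this linear-in-$N$ bound with a uniform per-slot constant survives the normalization $\frac1N$ and justifies the interchange of limits; mere lower semicontinuity does not. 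A secondary loose end: $\alpha'_m\circ\gamma$ and $\beta'_m\circ\delta$ need not be unital, so one must either assume the approximants are u.c.p. or perturb them to be, before invoking monotonicity for the class of maps the functionals are defined over.
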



\subsection{Quantum Dynamical Mutual Entropy for Density Operators}
Based on the above construction, we rewrite the dynamical entropies in terms of density operators.\\

\noindent Let $\mathbf{B}(\mathcal{H}_0)$ (resp. $\mathbf{B} (\bar{\mathcal{H}_0})$) be the set of all bounded linear operators on a Hilbert space $\mathcal{H}_0$ (resp. $\bar{\mathcal{H}_0}$) and $\mathcal{A}_0$ (resp. $\mathcal{B}_0$) be a finite subset in $\mathbf{B} (\mathcal{H}_0)$ (resp. $\mathbf{B}(\bar{\mathcal{H}_0})$). Furthermore, let $\mathcal{A}$ (resp. $\mathcal{B}$) be an infinite tensor product space of $\mathbf{B} (\mathcal{H}_0)$ (resp. $\mathbf{B}(\bar{\mathcal{H}_0})$) represented by
$$
\mathcal{A} := \otimes_{i = -\infty}^{\infty} \mathbf{B} (\mathcal{H}_0),
$$
$$
\mathcal{B} := \otimes_{i = -\infty}^{\infty} \mathbf{B} (\bar{\mathcal{H}_0}).
$$

\noindent Moreover,  we define a shift transformation on $\mathcal{A}$ (resp. $\mathcal{B}$) by $\theta_{\mathcal{A}}$ (resp. $\theta_{\mathcal{B}})$, that is,
$$
\theta_{\mathcal{A}}(\otimes_{i=-\infty}^{\infty} A_{i}) := \otimes_{i' = -\infty}^{\infty}A_{i'} \quad (i' = i - 1)\ , \quad \forall \ \otimes_{i = -\infty}^{\infty}A_i \in \mathcal{A},
$$
$$
\theta_{\mathcal{B}}(\otimes_{j=-\infty}^{\infty} B_{j}) := \otimes_{j' = -\infty}^{\infty}B_{j'} \quad (j' = j - 1) \ , \quad \forall  \ \otimes_{j = -\infty}^{\infty}B_j \in \mathcal{B}.
$$

\noindent $\alpha$ (resp. $\beta$) denotes the embedding from $\mathcal{A}_0$ to $\mathcal{A}$, (resp. $\mathcal{B}_0$ to $\mathcal{B}$):
$$
\alpha(A) := \cdots I \otimes I \otimes A \otimes I \otimes \cdots \in \mathcal{A} \quad ,\quad \forall A \in \mathcal{A}_0 ,
$$
$$
\beta(B) := \cdots I \otimes I \otimes B \otimes I \otimes \cdots \in \mathcal{B} \quad ,\quad \forall B \in \mathcal{B}_0 .
$$

\noindent Let by $\mathfrak{S}_0$ (resp. $\bar{\mathfrak{S}_0})$ be the set of all density operators on $\mathcal{H}_0$ (resp. $\bar{\mathcal{H}_0})$ and $\mathfrak{S}$ (resp. $\bar{\mathfrak{S}}$) be the set of all states $\rho \in \otimes_{i=-\infty}^\infty \mathfrak{S}_0$ (resp. $\bar{\rho}\in \otimes_{i=-\infty}^\infty \bar{\mathfrak{S}_0} )$.\\

\noindent Under the above notations, the dual maps $\theta_{\mathcal{A}}^*, \theta_{\mathcal{B}}^* , \alpha^*, \beta^* $ of $\theta_{\mathcal{A}}, \theta_{\mathcal{B}}, \alpha , \beta$ are obtained as follows:
\begin{enumerate}
\item $\theta_{\mathcal{A}}^{*}$ is a map from $\mathfrak{S}$ to $\mathfrak{S}$ satisfying
$$
\theta_{\mathcal{A}}^* (\otimes_{i=-\infty}^\infty \rho_i) = \otimes_{i' = -\infty}^\infty \rho_{i'} \quad (i' = i+1) \quad , \quad \forall \otimes_{i=-\infty}^\infty \rho_i \in \mathfrak{S}.
$$
\item $\theta_{\mathcal{B}}^{*}$ is a map from $ \bar{\mathfrak{S}}$ to $\bar{\mathfrak{S}}$ satisfying
$$
\theta_{\mathcal{B}}^* (\otimes_{j=-\infty}^\infty \bar{\rho_j}) = \otimes_{j' = -\infty}^\infty \bar{\rho}_{j'} \quad (j'=j+1) \quad ,\quad \forall \otimes_{j=-\infty}^\infty \bar{\rho_j}\in \bar{\mathfrak{S}}.
$$
\item $\alpha^{*}$ is a map from $\mathfrak{S}$ to $\mathfrak{S}_0$ such as
$$
\alpha^* (\otimes_{i=-\infty}^\infty\rho_i) = \mathrm{Tr}_{i\not = 0}(\otimes_{i= -\infty}^\infty \rho_i) = \rho_0 \quad , \quad \forall \otimes_{i=-\infty}^\infty \rho_i \in \mathfrak{S}.
$$
\item $\beta^{*}$ is a map from $\bar{\mathfrak{S}}$ to $\bar{\mathfrak{S}_0}$ such as
$$
\beta^* (\otimes_{j=-\infty}^\infty \bar{\rho_j}) = \mathrm{Tr}_{j\not = 0} (\otimes_{j=-\infty}^{\infty} \bar{\rho_j})=\bar{\rho_0} \quad , \quad \forall \otimes_{j=-\infty}^\infty \bar{\rho_j} \in \bar{\mathfrak{S}}.
$$
where $\mathrm{Tr}_{i\not = 0}$ means to take a partial trace except $i = 0$.
\end{enumerate}

\noindent Now we rewrite the quantum dynamical mutual entropy in density operators case as follows: \\
Put
$$
\alpha^N := (\alpha , \theta_{\mathcal{A}}\circ \alpha , \cdots , \theta_{\mathcal{A}}^{N-1} \circ \alpha),
$$
$$
\beta_{\Lambda}^N := (\Lambda \circ \beta , \Lambda \circ \theta_{\mathcal{B}} \circ \beta , \cdots , \Lambda \circ \theta_{B}^{N-1} \circ \beta ),
$$
where $\Lambda^* = \otimes_{i=-\infty}^{\infty} \Lambda^*$ is a channel from $\mathfrak{S}$ to  $\bar{\mathfrak{S}}$. For any $\otimes_{i=-\infty}^{\infty} \rho_i \in \mathfrak{S}$, an input compound state $\Phi_{\mu}^{\mathcal{S}}(\rho; \alpha^{N})$ with respect to $\alpha^{*}(\rho), \cdots , \alpha^{*} \circ \theta_{\mathcal{A}}^{*N-1}(\rho)$ is defined as
\begin{equation}\label{comp}
\Phi_E (\rho ; \alpha^N) := \sum_{n=1}^{M}\lambda_n \otimes_{i=0}^{N-1}\alpha^{*} \circ \theta_{\mathcal{A}}^{*i}(\otimes_{i=-\infty}^{\infty}E_{n_i}) = \otimes_{i=0}^{N-1} \rho_i .
\end{equation}
When a Schatten decomposition of $\rho_i \in \mathfrak{S}_0 \ (i = 0, ... , N - 1)$ is given by
$$
\rho_i = \sum_{n_i = 1}^{M_i}\lambda_{n_i}E_{n_i} \quad ,\quad (\sum_{n_i = 1}^{M_i}\lambda_{n_i} = 1,\ 0 \le \lambda_{n_i} \le 1), 
$$
the compound state (\ref{comp}) is expressed as
$$
\Phi_E (\rho ; \alpha^N) = \sum_{n_0 = 1}^{M} \cdots \sum_{n_{N-1}= 1}^{M} (\prod_{k=0}^{N-1}\lambda_{n_k}) (\otimes_{i=0}^{N-1}E_{n_i}) .
$$
For an initial state $\rho \in \mathfrak{S}$, we have an output compound state $\Phi_E (\rho ; \beta_{\Lambda}^N)$  with respect to $\beta^* \circ \Lambda^* (\rho), ... , \beta^* \circ \theta_{\mathcal{B}}^{*N-1} \circ \Lambda^* (\rho)$ as
$$
\Phi_E (\rho ; \beta_{\Lambda}^N) := \otimes_{i=0}^{N-1} \beta^* \circ \theta_{\mathcal{B}}^{*i}\Lambda^* (\rho) = \otimes_{i=0}^{N-1} \Lambda^* \rho_i .
$$
\begin{definition}
For any state $\rho = \otimes_{i=-\infty}^{\infty} \rho_i \in \mathfrak{S}$, the correlated compound state with respect to $\Phi_E (\rho \ ; \alpha^N) $ and $\Phi_E (\rho \ ; \beta_{\Lambda}^N)$ is given by
\begin{equation}
\Phi_E (\rho \ ; \alpha^N)\otimes \Phi_E (\rho \ ; \beta_{\Lambda}^N) = (\otimes_{i=0}^{N-1} \rho_i)\otimes(\otimes_{i=0}^{N-1}\Lambda^* \rho_i).
\end{equation}
The state $\Phi_E (\rho \ ; \alpha^N \cup\Lambda \beta^N)$ which represents correlation between two dynamical systems is written as
\begin{equation}
\Phi_E (\rho \ ; \alpha^N \cup\Lambda \beta^N) := \sum_{n_0 =1}^M ... \sum_{n_{N-1}=1}^M (\prod_{k=0}^{N-1} \lambda_{n_k})(\otimes_{i=0}^{N-1} E_{n_i})\otimes(\otimes_{i' = 0}^{N-1}\Lambda^* E_{n_{i'}}).
\end{equation}
\end{definition}

\begin{definition}
\noindent For any initial state $\rho = \otimes_{i=-\infty}^{\infty} \rho_i \in \mathfrak{S}$, the functionals $I_E (\rho \ ; \alpha^N , \beta_{\Lambda}^N)$, $I(\rho \ ; \alpha^N , \beta_{\Lambda}^N)$, $S_E (\rho \ ; \alpha^N)$ and $S(\rho , \alpha^N)$ are given by
$$
I_E (\rho \ ; \alpha^N , \beta_{\Lambda}^N) := S(\Phi_E (\rho \ ; \alpha^N \cup \beta_{\Lambda}^N) , \Phi_E (\rho \ ; \alpha^N)\otimes \Phi_E (\rho \ ; \beta_{\Lambda}^N))
$$
$$
I(\rho \ ; \alpha^N , \beta_{\Lambda}^N) := \sup \{ I_E (\rho \ ; \alpha^N , \beta_{\Lambda}^N) \ ; E = \{ E_n \} \}
$$
where the supremum of $I_E (\rho ; \alpha^N , \beta_{\Lambda}^N)$ is taken over possible choices $E = {E_n}$ of the Schatten decompositions of $\rho_i$.
$$
S_E (\rho \ ; \alpha^N) := \sum_n \lambda_n S(\otimes_{i=0}^{N-1}E_{n_i} , \otimes_{i' = 0}^{N-1}\rho_{i'} )
$$
$$
S(\rho , \alpha^N) := \sup \{ S_E (\rho \ ; \alpha^N) ; E = \{ E_n \} \}.
$$
\end{definition}

\noindent Now we state the definitions of quantum dynamical entropies in density case.
\begin{definition}
Then the quantum dynamical entropy and the quantum dynamical mutual entropy are given by
\begin{equation}
\tilde{S}(\rho \ ; \theta_{\mathcal{A}} , \alpha) := \lim_{N \to \infty}\frac{1}{N} S (\rho \ ; \alpha^N),
\end{equation}
\begin{equation}
\tilde{I}(\rho \ ; \Lambda^* , \theta_{\mathcal{A}}, \theta_{\mathcal{B}} , \alpha , \beta_{\Lambda}) := \lim_{N \to \infty} \frac{1}{N} I(\rho \ ; \alpha^N , \beta_{\Lambda}^N).
\end{equation}
\end{definition}
$\ \ $There have been several attempts at defining dynamical mutual entropy on operator algebras. In \cite{muto}, Muto and Watanabe introduced the quantum dynamical mutual entropy whose time evolutions are given by c.p. maps. Furthermore, quantum Markovian dynamical mutual entropy was formulated by Ohmura and Watanabe on von Neumann algebras \cite{ow}.


\section{Comparison of Modulated States}
Optical communication using photons (laser beam) as carrier waves is currently widely used. In optical communication, one have to properly modulate the signal to the optical device. \\
In this section, we discuss the efficiency of two optical modulations (PPM, PWM) with the quantum states by using the entropy ratio given by the quantum dynamical entropy and the quantum dynamical mutual entropy. \\

\noindent Let $\{\alpha_1 , \cdots , \alpha_M \}$ be an alphabet set constructing the input signals and $\{ E_1 , \cdots , E_N \}$ be the set of all one dimensional projections on a space $\mathcal{H}$ satisfying $E_n \perp E_m \ (n \neq m)$. Then $E_n$ corresponds to the alphabet $a_n$.\\
$\mathfrak{S}_0$ denotes the set of all density operators on $\mathcal{H}$:
$$
\mathfrak{S}_0 := \{ \rho_0 = \sum_{n=1}^N \lambda_n E_n \ ;\  \rho_0 \ge 0 \ ,\ \mathrm{Tr} \rho_0 = 1\},
$$
where $\rho_0$ represents a state of the quantum input system.  In order to send information effectively, $\rho_0$ is transmitted from the quantum input system $\mathfrak{S}_0$ to the quantum modulator.\\
\noindent Let $M$ be an modulator and $\{ E_1^{(M)}, \cdots , E_N^{(M)} \}$ be the set of one dimensional projections on a Hilbert space $\mathcal{H}_{M}$ for modulated signals satisfying $E_n^{(M)} \perp E_m^{(M)} (n \neq m)$, and we represent the set of all density operators on $\mathcal{H}_M$ by
$$
\mathfrak{S}_0^{(M)} := \{ \rho_0^{(M)} = \sum_{n=1}^N \mu_n E_n \ ;\ \rho_0^{(M)} \ge 0 \ ,\ \mathrm{Tr} \rho_0^{(M)} = 1\},
$$
where $\rho_0^{(M)}$ represents a modulated state of the quantum input system.There are several expressions of quantum modulations \cite{effi}. In this paper, we give the modulated states by means of the photon number states.\\

\noindent Let  $\gamma_M$ be a c.p.u. map from $\mathcal{A}_0$ to $\mathcal{A}$. Then we obtain the c.p. channel $\gamma_M^* (E_n) = E_n^{(M)}$. The map $\gamma_M^* : \mathfrak{S}_0 \to \mathfrak{S}_0^{(M)}$ represents a modulator. Moreover, if  $\gamma_{M}^* (E_n) = E_n^{(M)}$ is a modulator from $\mathfrak{S}_0$ to $\mathfrak{S}_0^{(M)}$ and $\gamma_{(M)}^* (E_n) \perp \gamma_{(M)}^* (E_m)$ holds for any orthogonal $E_n \in \mathfrak{S}_0$, $\gamma_{M}^* $ is called an {\it ideal modulator} $IM$ and denoted by $\gamma_{IM}^*$.\\
Several examples of ideal modulators are given as follows:
\begin{definition}
For any $E_n \in \mathfrak{S}_0$, the PAM (Pulse Amplitude Modulator) is defined by
\begin{equation}
\gamma_{PAM}^* (E_n) :=  E_n^{(PAM)} = |n\rangle \langle n|
\end{equation}
where $|n\rangle \langle n |$ is the $n$ photon number state on $\mathcal{H}$.
\end{definition}

\begin{definition}
For any $E_n \in \mathfrak{S}_0$, the PWM (Pulse Width Modulator) is defined by
$$
\gamma_{PWM}^* (E_n) :=  E_n^{(PWM)}
$$
\begin{equation}
= E_d^{(PAM)} \otimes \cdots \otimes E_d^{(PAM)} \otimes \overbrace{E_d^{(PAM)}}^{n-th} \otimes E_0^{PAM} \otimes \cdots \otimes E_0^{PAM}
\end{equation}
where $E_0^{(PAM)} = |0\>\<0|$ is a vacuum state and $ E_d^{(PAM)} = | d\rangle \langle d |$.
\end{definition}

\begin{definition}
For any $E_n \in \mathfrak{S}_0$, the PPM (Pulse Position Modulator) is defined by
$$
\gamma_{PPM}^* (E_n) :=  E_n^{(PPM)}
$$
\begin{equation}
= E_0^{(PAM)} \otimes \cdots \otimes E_0^{(PAM)} \otimes \overbrace{E_d^{(PAM)}}^{n-th} \otimes E_0^{PAM} \otimes \cdots \otimes E_0^{PAM}
\end{equation}
\end{definition}

\noindent Now we calculate the quantum dynamical entropies for the modulated states (PWM, PPM) expressed by the photon number states as above.\\

\noindent {\bf PWM}\\
The finite sequence of c.p.u. maps  $\alpha_{(PWM)}^N$ and $\beta_{(PWM)}^N$ are given by
$$
\alpha_{(PWM)}^N := (\alpha \circ \tilde{\gamma}_{(PWM)} , \theta_{\mathcal{A}} \circ \alpha \circ \tilde{\gamma}_{(PWM)}, \cdots , \theta_{\mathcal{A}}^{N-1} \circ \alpha \circ \tilde{\gamma}_{(PWM)}),
$$
$$
\beta_{(PWM)}^N := (\tilde{\gamma}_{(PWM)} \circ \tilde{\Lambda} \circ \beta , \tilde{\gamma}_{(PWM)}) \circ \tilde{\Lambda} \circ \theta_{\mathcal{B}} \circ \beta , \cdots , \tilde{\gamma}_{(PWM)} \circ \tilde{\Lambda} \circ \theta_{\mathcal{B}}^{N-1} \circ \beta ),
$$
where we put $\tilde{\Lambda} := \otimes_{i=-\infty}^{\infty} \Lambda$ and $\tilde{\gamma}_{PWM} := \otimes_{i=-\infty}^{\infty} \gamma_{(PWM)}$.\\
Now 
$$
\rho := \otimes_{i=-\infty}^{\infty} \rho_i \in \otimes_{i=-\infty}^{\infty} \mathfrak{S}_i
$$
denotes a stationary input state. Furthermore, let $\rho_i = \sum_{n_i = 1}^{M} \lambda_{n_i} E_{n_i}$ be the Schatten decomposition of $\rho_i$. And we define $E_{n_i}^{(PWM)}$ as follows
$$
E_{n_i}^{(PWM)} := \otimes_{j=1}^{M} E_{d \tau_{j, n_i}}^{(PAM)},
$$
\begin{equation*}
\tau_{j,n_i} := \begin{cases} 
	             1\qquad (j \le n_i) \\
	             0 \qquad (j > n_i)
	\end{cases}
\end{equation*}
Then the compound states of input and output are given by  
\begin{eqnarray*}
\Phi_E (\alpha_{(PWM)}^{N}) &=& \otimes_{i=0}^{N-1} \gamma_{(PWM)}^* \circ \alpha^* \circ \theta_{\mathcal{A}}^{*i} (\rho)\\
&=& \otimes_{i=0}^{N-1} \gamma_{(PWM)}^* (\rho_i),\\
\Phi_E (\beta_{\Lambda(PWM)}^N) &=& \otimes_{i=0}^{N-1} \circ \beta^* \circ \theta_{\mathcal{B}}^*i (\rho) \circ \lambda^* \circ \gamma_{(PWM)}^* (\rho) \\
&=& \otimes_{i=0}^{N-1} \Lambda^* \circ \gamma_{(PWM)}^* (\rho_i)
\end{eqnarray*}
respectively.\\
When $\Lambda^*$ is an attenuation channel (\ref{atte}), the compound states through the channel $\tilde{\Lambda}^*$ becomes
$$
\Phi_E (\alpha_{(PWM)}^N \cup \beta_{\Lambda (PWM)}^N)
$$
$$
= \sum_{n_0 = 1}^M \cdots \sum_{n_{N-1}=1}^M (\prod_{k=0}^{N-1} \lambda_{n_k}) (\otimes_{i=0}^{N-1} (\otimes_{j=1}^M E_{d \tau_{j , n_i}}^{(PAM)})) \otimes
 (\otimes_{i' = 0}^{N-1} (\otimes_{j' = 1}^M E_{d, \tau_{j' , n_{i'}}}^{(PAM)})),
$$
$$
\Phi_E (\alpha_{(PWM)}^N )\otimes \Phi_E(\beta_{\Lambda (PWM)}^N) 
$$
\begin{eqnarray*}
&=& \sum_{n_0 = 1}^M \cdots \sum_{n_{N-1}=1}^M (\prod_{k=0}^{N-1} \lambda_{n_k}) (\otimes_{i=0}^{N-1}(\otimes_{j=1}^M E_{d \tau_{j, n_i}}^{(PAM)})) \otimes \\
& & \sum_{n'_0 = 1}^M \cdots \sum_{n'_{N-1}=1}^M (\prod_{k'=0}^{N-1} \lambda_{n'_{k'}})(\otimes_{j'=0}^{N-1}(\otimes_{j'=1}^M \Lambda^* E_{d\tau_{j', n'_{i'}}}^{(PAM)})).
\end{eqnarray*}
After the calculation we get
\begin{equation}
\tilde{S} (\rho \ ;\ \theta_{\mathcal{A}},\ \alpha_{(PWM)} ) = - \sum_{n=1}^M \lambda_n \log \lambda_n ,
\end{equation}
\begin{equation}
\tilde{I} (\rho \ ; \Lambda^* , \theta_{\mathcal{A}}, \theta_{\mathcal{B}}, \alpha_{(PWM)}, \beta_{\Lambda(PWM)}) = - \sum_{n=1}^M(1-(1-\eta)^d)^n \lambda_n \log \lambda_n.
\end{equation}

\noindent {\bf PPM}\\
Under the same conditions, similarly we obtain
\begin{equation}
\tilde{S} (\rho \ ;\ \theta_{\mathcal{A}},\ \alpha_{(PPM)} ) = - \sum_{n=1}^M \lambda_n \log \lambda_n ,
\end{equation}
\begin{equation}
\tilde{I} (\rho \ ; \Lambda^* , \theta_{\mathcal{A}}, \theta_{\mathcal{B}}, \alpha_{(PPM)}, \beta_{\Lambda(PPM)}) = - (1-(1-\eta)^d) \sum_{n=1}^M \lambda_n \log \lambda_n 
\end{equation}
(e.g. see \cite{acnote}).\\

\noindent Since $0 \le \eta \le 1$ (\ref{eta}), one can see the following result.

\begin{theorem}
Under the above settings, there holds
\begin{equation}
\tilde{I} (\rho\ ; \Lambda^* , \theta_{\mathcal{A}}, \theta_{\mathcal{B}}, \alpha_{(PPM)}, \beta_{\Lambda(PPM)}) \geq \tilde{I} (\rho \ ; \Lambda^* , \theta_{\mathcal{A}}, \theta_{\mathcal{B}}, \alpha_{(PWM)}, \beta_{\Lambda(PWM)}).
\end{equation}
\end{theorem}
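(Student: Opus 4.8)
The plan is to reduce the statement to an elementary term-by-term sign comparison, since both dynamical mutual entropies have already been computed in closed form in the preceding paragraphs. First I would substitute the two expressions obtained above, namely
$$\tilde{I}(\rho\,;\Lambda^*,\theta_{\mathcal{A}},\theta_{\mathcal{B}},\alpha_{(PPM)},\beta_{\Lambda(PPM)}) = -\bigl(1-(1-\eta)^d\bigr)\sum_{n=1}^M \lambda_n \log\lambda_n,$$
$$\tilde{I}(\rho\,;\Lambda^*,\theta_{\mathcal{A}},\theta_{\mathcal{B}},\alpha_{(PWM)},\beta_{\Lambda(PWM)}) = -\sum_{n=1}^M \bigl(1-(1-\eta)^d\bigr)^n \lambda_n \log\lambda_n,$$
and abbreviate $c := 1-(1-\eta)^d$. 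By the definition of the transition ratio (\ref{eta}) we have $0 \le \eta \le 1$, hence $0 \le 1-\eta \le 1$, and since $d$ is a nonnegative integer it follows that $0 \le (1-\eta)^d \le 1$, so that $c \in [0,1]$.

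With this notation the desired inequality is equivalent, after multiplying through by $-1$ and collecting terms on one side, to
$$\sum_{n=1}^M (c^n - c)\,\lambda_n \log\lambda_n \ge 0.$$
I would then examine the sign of each summand separately. Since $\rho_i$ is a density operator, its Schatten eigenvalues satisfy $0 \le \lambda_n \le 1$, whence $\log\lambda_n \le 0$ and therefore $\lambda_n\log\lambda_n \le 0$ for every $n$. For the coefficient, note that $c^1 - c = 0$, while for $n \ge 2$ one has $c^n = c\cdot c^{\,n-1} \le c$ because $0 \le c^{\,n-1}\le 1$; thus $c^n - c \le 0$ for all $n \ge 1$. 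Each term is consequently a product of two nonpositive factors, hence nonnegative, and so is their sum, which establishes the claim.

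The hard part here is essentially nonexistent beyond careful bookkeeping of signs: the genuinely nontrivial work was already carried out in deriving the two closed-form expressions for the modulated dynamical mutual entropies. The only points that warrant a little care are confirming the range $c \in [0,1]$ — so that the power inequality $c^n \le c$ holds in the correct direction rather than reversed — and checking the boundary and degenerate cases, for instance $\eta = 0$ or $\eta = 1$ (equivalently $c = 0$ or $c = 1$) and eigenvalues $\lambda_n \in \{0,1\}$ where $\lambda_n\log\lambda_n$ is read as $0$ by the usual convention; in each such case every summand remains nonnegative, so the inequality is preserved.
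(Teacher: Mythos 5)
Your argument is correct and is exactly the one the paper intends: the paper merely remarks ``since $0 \le \eta \le 1$, one can see the result,'' which amounts to your observation that $c := 1-(1-\eta)^d \in [0,1]$ implies $c^n \le c$ while $\lambda_n \log \lambda_n \le 0$, so the PPM expression dominates term by term. Your write-up just makes explicit the sign bookkeeping the paper leaves to the reader.
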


\noindent Furthermore, an important measure to consider the transmission efficiency of modulators is the {\it entropy ratio} \cite{itsuse} \cite{effi}. For the above entropies and an ideal modulators $IM$, the entropy ratio is given by
\begin{equation}
r (\rho \ ; \Lambda^* , \theta_{\mathcal{A}}, \theta_{\mathcal{B}}, \alpha_{(IM)} \beta_{\Lambda(IM)}) := \frac{\tilde{I} (\rho \ ; \Lambda^* , \theta_{\mathcal{A}}, \theta_{\mathcal{B}}, \alpha_{(IM)}, \beta_{\Lambda(IM)})}{\tilde{S} (\rho \ ;\ \theta_{\mathcal{A}},\ \alpha_{(IM)})}.
\end{equation}
From the fundamental inequalities (\ref{fund2}),
\begin{equation}
0 \le r (\rho \ ; \Lambda^* , \theta_{\mathcal{A}}, \theta_{\mathcal{B}}, \alpha_{(IM)} \beta_{\Lambda(IM)}) \le 1.
\end{equation}
Therefore, the entropy ratio is a measure which gives the rate of the amount of information correctly transmitted from the input to the output system. Thus, by fixing $\Lambda^* , \theta_{\mathcal{A}},$ and $\theta_{\mathcal{B}}$, we can compare the transmission efficiency of the modulators.\\ 

\noindent Now we state the main result in this paper. 
\begin{theorem}
For an initial state $\rho$, the following inequality holds:
\begin{equation}
r (\rho \ ; \Lambda^* , \theta_{\mathcal{A}}, \theta_{\mathcal{B}}, \alpha_{(PPM)} \beta_{\Lambda(PPM)}) \geq r (\rho \ ; \Lambda^* , \theta_{\mathcal{A}}, \theta_{\mathcal{B}}, \alpha_{(PWM)} \beta_{\Lambda(PWM)}).
\end{equation}
\end{theorem}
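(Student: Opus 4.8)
The plan is to exploit the fact that the two entropy ratios share a common denominator. From the computations preceding the statement, the quantum dynamical entropies of the two modulators coincide,
$$
\tilde{S}(\rho \ ; \theta_{\mathcal{A}}, \alpha_{(PPM)}) = \tilde{S}(\rho \ ; \theta_{\mathcal{A}}, \alpha_{(PWM)}) = -\sum_{n=1}^M \lambda_n \log \lambda_n,
$$
since in each case the entropy reduces to the Shannon entropy of the eigenvalue distribution $\{\lambda_n\}$ of $\rho_0$, which is unaffected by the particular orthogonality-preserving photon-number encoding. I would first record this equality explicitly and note that the common value is strictly positive whenever $\rho_0$ is not a one-dimensional projection, so that both ratios are well-defined.

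Granting this, the argument is almost immediate. Writing $S := -\sum_n \lambda_n \log \lambda_n > 0$ for the common denominator, the definition of the entropy ratio gives
$$
r(\rho \ ; \Lambda^*, \theta_{\mathcal{A}}, \theta_{\mathcal{B}}, \alpha_{(IM)}, \beta_{\Lambda(IM)}) = \frac{\tilde{I}(\rho \ ; \Lambda^*, \theta_{\mathcal{A}}, \theta_{\mathcal{B}}, \alpha_{(IM)}, \beta_{\Lambda(IM)})}{S}
$$
for $IM \in \{PPM, PWM\}$. Dividing the inequality of the preceding Theorem by the positive number $S$ preserves its direction and yields exactly the claimed bound on the ratios.

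For a self-contained derivation that does not invoke the previous Theorem, I would instead substitute the two mutual-entropy formulas directly. Setting $c := 1 - (1-\eta)^d$, the bound $0 \le \eta \le 1$ forces $0 \le c \le 1$, and one finds, abbreviating the ratios by $r_{(PPM)}$ and $r_{(PWM)}$,
$$
r_{(PPM)} = c, \qquad r_{(PWM)} = \frac{\sum_{n=1}^M c^n w_n}{\sum_{n=1}^M w_n}, \qquad w_n := -\lambda_n \log \lambda_n \ge 0.
$$
The desired inequality $c \ge r_{(PWM)}$ is then equivalent to $\sum_{n=1}^M (c - c^n) w_n \ge 0$, which holds termwise because $c^n \le c$ for every $n \ge 1$ when $0 \le c \le 1$, while each $w_n \ge 0$.

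The main point — indeed the only point requiring care — is the observation that the dynamical entropies agree for the two modulators; once that is in hand the comparison of ratios collapses either to the mutual-entropy comparison already established or to the elementary monotonicity $c^n \le c$. The sole caveat is the degenerate case $S = 0$ (a pure input $\rho_0$), where the ratio is an indeterminate $0/0$ and must be excluded by assuming the input distribution $\{\lambda_n\}$ is nondegenerate.
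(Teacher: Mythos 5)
Your proof is correct and follows essentially the same route as the paper: the two dynamical entropies coincide, so the ratio inequality follows by dividing the mutual-entropy inequality of the preceding theorem by the common positive denominator. Your additional self-contained verification via $c^n \le c$ for $0 \le c \le 1$, and your caveat about the degenerate case $S = 0$, are more careful than the paper's two-line proof, but the underlying argument is identical.
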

\begin{proof}
According to the equality
$$
\tilde{S} (\rho \ ;\ \theta_{\mathcal{A}},\ \alpha_{(PPM)}) = \tilde{S} (\rho \ ;\ \theta_{\mathcal{A}},\ \alpha_{(PWM)} )  
$$
and Theorem 5, we obtain the above inequality.
\end{proof}
This result tells us that, under the above conditions, the loss of the average amount of information is smaller in the case of modulating the input quantum state using PPM than in the case of PWM.


\end{document}